\documentclass[a4paper,USenglish,cleveref, autoref, thm-restate]{lipics-v2021}
\usepackage{amsmath}
\usepackage{amssymb}
\usepackage{amsfonts}
\usepackage{mathrsfs}
\usepackage{caption}
\usepackage{subcaption}
\usepackage[T1]{fontenc}
\usepackage{graphicx}
\usepackage{color}
\usepackage{hyperref}
\usepackage{bbding}

\usepackage{tikz}
\usetikzlibrary{automata,positioning, arrows.meta, shapes.geometric,
	arrows,decorations.pathreplacing,calligraphy}
\tikzset{
	->,  
	>=stealth, 
	shorten >=1pt, 
	node distance=2cm, 
	every state/.style={thick, fill=gray!10}, 
	initial text=$ $, 
	initial distance = .5cm,
	accepting distance = .5cm,
	accepting text=$ $, 
}

\newcommand{\A}{\mathcal{A}}
\newcommand{\B}{\mathcal{B}}
\newcommand{\N}{\mathbb{N}}
\newcommand{\cmin}{c_\mathit{min}}
\newcommand{\cmax}{c_\mathit{max}}
\newcommand{\cons}{\mathscr{C}}
\newcommand{\jmin}{\widehat{J}_\mathit{min}}
\newcommand{\jmax}{\widehat{J}_\mathit{max}}
\newcommand{\alphamax}{\widehat{\alpha}_\mathit{max}}
\newcommand{\cof}{\mathit{cf}}

\title{Constructible Words Characterize Rational Languages of Words Indexed by Scattered Linear Orderings }
\titlerunning{Constructible Words and Automata on Scattered Linear Orderings}

\author{Thomas Braipson}{University of Liège, Montefiore Institute B28 Allée de la découverte 9 4000 Liège, Belgium}{thomas.braipson@uliege.be}{0009-0003-0543-5265}{has been granted with a FRIA grant.}
\author{Tom Clara}{University of Liège, Montefiore Institute B28 Allée de la découverte 9 4000 Liège, Belgium}{tom.clara@uliege.be}{0009-0009-2938-1410}{is a Research Fellow of the F.R.S.-FNRS.}
\authorrunning{Th.~Braipson and T.~Clara}
\Copyright{Thomas Braipson and Tom Clara}

\ccsdesc[500]{Theory of computation~Automata over infinite objects}

\keywords{Automata on linear orderings, Rational languages, Ultimately periodic words, Constructible Words, Complementation, Algebraic properties of automata, Semigroups}

\acknowledgements{The authors wish to thank Olivier Carton for discussing with them and advertising Colcombet's theorem. They also wish to thank the anonymous reviewers for their comments and ideas to go beyond this paper.}

\funding{This work is partially supported by the FNRS-DFG PDR Weave (SMT-ART) grant 40019202.}

\EventEditors{Michal Kouck\'{y} and Daniela Petri\c{s}an}
\EventNoEds{2}
\EventLongTitle{51st International Symposium on Mathematical Foundations of Computer Science (MFCS 2026)}
\EventShortTitle{MFCS 2026}
\EventAcronym{MFCS}
\EventYear{2026}
\EventDate{August 24--28, 2026}
\EventLocation{Paris, France}
\EventLogo{}
\SeriesVolume{386}
\ArticleNo{20}

\pdfoutput=1 
\hideLIPIcs
\nolinenumbers
\begin{document}
	
\maketitle

\begin{abstract}
	Automata on linear orderings are finite-state automata introduced by Bruyère and Carton as a broad generalization of finite, infinite and transfinite-word automata. In this context, a word is defined as a function from a linear ordering to a finite alphabet. This general definition can make automata on linear orderings difficult to reason about. In this work, we introduce constructible words as an intuitive way of tackling this difficulty. These words can be obtained by a finite number of applications of simple operators and thus admit a finite notation. We show that a rational language of words indexed by scattered (countable and uncountable) linear orderings is characterized by its constructible words. Our proof of this result relies on an interesting theorem of semigroup theory due to Colcombet. We expect this property to be useful in future theoretical developments about automata on scattered linear orderings.
\end{abstract}

\section{Introduction}
Automata on linear orderings~\cite{BC07,BC06,epsilon-automata} were introduced by Bruyère and Carton and consist of a broad and elegant generalization of finite-word automata, infinite-word automata~\cite{WolfgangThomas}, automata on bi-infinite words~\cite{PN82} and transfinite words~\cite{Woj85,Chou78}. In the context of automata on linear orderings, a word is defined as a function from any linear (i.e., total) ordering to a finite alphabet. This flexible definition generalizes the previously mentioned common notions of words.

Reasoning about automata on linear orderings (for instance, showing that two automata accept exactly the same words) can be tedious, since some complex words do not admit a natural mental representation. In this work, we introduce a theoretical tool that addresses this problem. We first define a \emph{constructible} word as a word that admits a canonical, finite representation. We then prove that automata on scattered (countable and uncountable) linear orderings that accept the same constructible words are equivalent. Stated differently, rational languages of words indexed by scattered linear orderings are characterized by their constructible words. We expect this result to be useful in further developments.

Our work extends the characterization of $\omega$-regular languages by ultimately periodic words~\cite[Fact 1]{ultimately-periodic-words}. A simple proof of this classical result relies on the closure of $\omega$-regular languages under complementation~\cite{Buc62}. However, rational languages of words indexed by scattered (countable and uncountable) linear orderings are not closed under complementation~\cite{Ris05}, hence our proof relies on different principles. 

Sketchily, our proof is based on the fact that some words cannot be distinguished by a finite-state automaton, which naturally induces an equivalence relation on words. This idea, already used by Büchi in~\cite{Buc62}, was then adapted by Carton and Rispal to prove that rational languages on finite-rank scattered linear orderings are closed under complementation~\cite{RispalFiniteRank}. We reuse this idea in the case of words on scattered linear orderings, and we show that each equivalence class induced by an automaton contains a constructible word.

Much work has already been carried out regarding rational languages on \emph{countable} scattered linear orderings. These languages were shown to coincide with those recognizable by a special form of semigroup~\cite[Theorem 14]{Ris05}. As a consequence, they are closed under complementation~\cite[Theorem 16]{Ris05}, and definable in the monadic second-order theory of order (MSO)~\cite{automata-MSO}. Carton also showed that any non-empty rational language on scattered linear orderings contains a countable constructible word~\cite[Theorem 2]{carton-emptiness}. In the case of countable scattered words, our result follows immediately from these previous works. It can also be deduced from several results about the decidability~\cite{Shelah_1975} and expressivity~\cite{CCP} of MSO on countable domains. Hence our main contribution is in fact to provide a characterization of rational languages that goes beyond countability. 

Our proofs exploit connections between rational languages and semigroups. It turns out that the algebraic properties of rational languages on unrestricted (i.e., countable and uncountable) scattered linear orderings are not as strong as they are in the countable case. Nevertheless, they remain useful, especially when combined with powerful semigroup results such as the factorization forests theorem due to Colcombet~\cite[Theorem 4]{Colcombet-ramseyan-splits}, which is a generalization of Simon's factorization forests theorem~\cite{simon}. Our work strongly relies on this result, and provides an example of use of its infinite version.

The remainder of this article is organized as follows. Section~\ref{sec:basic} recalls some notions related to automata on linear orderings. Then, in Section~\ref{sec:constructible-words} we formally define the notion of \emph{constructible} word. Section~\ref{sec:equivalence-relation} introduces an equivalence relation on words and then motivates the link between automata and semigroups. Finally, in Section~\ref{sec:constructible-words-characterization}, we present Colcombet's theorem and apply it to show that two automata on scattered linear orderings are equivalent if and only if they accept the same constructible words. 

\section{Preliminary Notions}\label{sec:basic}

\subsection{Orderings and Cuts}
In this section, we recall some definitions on linear orderings. Further details can be found in~\cite{Ros82}. A \emph{linear ordering} is a set $J$ equipped with a binary relation $<_J$ that is
\begin{itemize}
	\item total (for every $j \neq k \in J$, either $j <_J k$ or $k <_J j$),
	\item irreflexive (for every $j \in J$, $j \not<_J j$),
	\item antisymmetric (for every $j, k\in J$, $j <_J k$ implies $k \not<_J j$),
	\item and transitive (for every $j,k, \ell \in J$, $j <_J k$ and $k <_J \ell$ imply $j <_J \ell$).
	\end{itemize}
We simply write $<$ in place of $<_J$ when the ordering $J$ is clear from the context.

Given a linear ordering $J$, a subordering of $J$ is a subset $I\subseteq J$ equipped with the order relation $<_I$ defined as the restriction of $<_J$ to $I$. A linear ordering $J$ is \emph{dense} if for any $j < k\in J$, there exists $\ell\in J$ such that $j < \ell < k$. A linear ordering is \emph{scattered} if it does not contain any infinite dense subordering\footnote{Notice that according to the definition of density, a singleton is a dense linear ordering. Hence there exist dense linear orderings that are not infinite.}. This work focuses on scattered (countable and uncountable) linear orderings.

Two elements $j < k \in J$ are said to be
\emph{consecutive} if there does not exist $\ell \in J$ such that $j <
\ell < k$. In such a case, $k$ is the \emph{successor} of $j$, and $j$
the \emph{predecessor} of $k$.

A \emph{cut} of a linear ordering $J$ is an ordered pair $(K,L)$ such that $\{K,L\}$ is a partition of $J$, and for every $k \in K$ and $\ell \in L$, one has $k < \ell$. The set of cuts of $J$ is
denoted by $\widehat{J}$. This set is itself a linear ordering under the relation $<_{\widehat{J}}$ such that $(K_1, L_1)
<_{\widehat{J}} (K_2, L_2)$ if and only if $K_1 \subsetneq K_2$ (or equivalently
$L_2 \subsetneq L_1$). The linear ordering $\widehat{J}$ has a least and a greatest element. These are respectively called the \emph{first cut}
$\widehat{J}_{\mathit{min}} = (\emptyset, J)$ and the \emph{last cut}
$\widehat{J}_{\mathit{max}} = (J, \emptyset)$. The set $\widehat{J}
\setminus \{ \widehat{J}_{\mathit{min}}, \widehat{J}_{\mathit{max}}
\}$ of \emph{non-extremal} cuts of $J$ is denoted by $\widehat{J}^*$.  

Given two linear orderings $J_1$ and $J_2$, the ordering $J = J_1 +
J_2$ is defined as the ordering obtained by placing the elements of
$J_1$ before those of $J_2$, both sets keeping their internal order
among their elements. Formally, $J$ can be defined as the set $(J_1
\times \{ 1 \}) \cup (J_2 \times \{ 2 \})$ such that $(j, m) < (k, n)$
if and only if either $m <_{\mathbb{N}} n$, or $m = n$ and $j <_{J_m} k$. 
More generally, given the linear orderings $J$ and $K_j$ for each $j \in J$, the
ordering $\sum_{j \in J} K_j$ is obtained, schematically, by
replacing every element $j$ of $J$ by the corresponding set $K_j$,
keeping the internal order among the elements of this set. Formally,
we have $\sum_{j \in J} K_j = \{ (k, j) \mid j \in J \,\wedge\, k \in
K_j \}$, with $(k_1, j_1) < (k_2, j_2)$ if and only if either $j_1 <_J j_2$, or
$j_1 = j_2$ and $k_1 <_{K_{j_1}} k_2$. 

Given a linear ordering $J$, the \emph{reverse} linear ordering $-J$ is obtained by keeping the same elements as $J$ and by flipping the order relation $<_J$. For two elements $i,j\in -J$, we thus have $i<_{-J}j$ if and only if $j<_{J}i$.

\begin{example}
    The linear ordering $\mathbb{N} + \mathbb{Z} + 2$ and its cuts are depicted in Figure~\ref{fig:example-LO}, where dots represent the elements of $\mathbb{N}+\mathbb{Z}+2$ and vertical bars represent its cuts.
\end{example}
\begin{figure}[h!]
    \centering
    \begin{tikzpicture}[on grid, auto]
        \foreach \i in {0,1,...,3} {
			\fill (\i-0.5,-.4) rectangle (\i-0.45, 0.4);}
        \foreach \i in {0,1,2} {
        \node[circle, fill=black, inner sep=2pt] at (\i,0) {};}
        \node at (3.25,0) {$\dots$};
        \draw [-,decorate,decoration={calligraphic brace,amplitude=5pt,mirror,raise=4ex}, line width = 1pt]
      (-.45+.4,0) -- (4.-.4,0) node[midway,yshift=-4em]{$\mathbb{N}$};
      \fill (1.25,-4.7ex+1pt) rectangle (1.6,-4.7ex);
      \fill (3,-4.7ex+1pt) rectangle (3.45,-4.7ex);
            
        \fill (4.5-0.5,-.4) rectangle (4.5-0.45, 0.4);
        \node at (4.75,0) {$\dots$};
        \foreach \i in {0,1,...,4} {
			\fill (\i+5.5,-.4) rectangle (\i+5.45, 0.4);}
        \foreach \i in {1,2,...,4} {
        \node[circle, fill=black, inner sep=2pt] at (\i+5,0) {};}
        \node at (10.25,0) {$\dots$};
        \draw [-,decorate,decoration={calligraphic brace,amplitude=5pt,mirror,raise=4ex}, line width = 1pt]
      (4.05+.4,0) -- (11-.05-.4,0) node[midway,yshift=-4em]{$\mathbb{Z}$};
      \fill (6.5,-4.7ex+1pt) rectangle (7.35,-4.7ex);
      \fill (10,-4.7ex+1pt) rectangle (10.4,-4.7ex);

        \fill (11,-.4) rectangle (11-.05, 0.4);
        \foreach \i in {12,13} {
			\fill (\i,-.4) rectangle (\i-.05, 0.4);}
        \foreach \i in {11,12} {
        \node[circle, fill=black, inner sep=2pt] at (\i+.5,0) {};}
        \draw [-,decorate,decoration={ calligraphic brace,amplitude=5pt,mirror,raise=4ex}, line width = 1 pt]
      (11+.4,0) -- (13-.05-.4,0) node[midway,yshift=-4em]{$2$};
      \fill (11.5+.2,-4.7ex+1pt) rectangle (11.8,-4.7ex);
      \fill (12.7-.4,-4.7ex+1pt) rectangle (12.8-.4,-4.7ex);
    \end{tikzpicture}
    \caption{The linear ordering $ \mathbb{N} + \mathbb{Z} + 2$ and its cuts.}
    \label{fig:example-LO}
\end{figure}

\subsection{Words Indexed by Linear Orderings}

This section introduces a few definitions taken from \cite{BC07}. A
\emph{word indexed by a linear ordering} is a function $w: J
\rightarrow \Sigma$ from a linear ordering $J$ to a finite alphabet
$\Sigma$. We say that $J$ is the \emph{length} of $w$, which is denoted by $J=|w|$, and that $\widehat{J}$ are its cuts. In the particular case where $J$ is finite, the word $w$ is simply a finite word. The notions of infinite and bi-infinite words correspond to the cases where $J=\N$ and $J=\mathbb{Z}$, respectively.
The empty word is the only word indexed by the empty linear ordering. In the remainder of this article, we will consider words indexed by arbitrary scattered (countable and uncountable) linear orderings, that we will call \emph{scattered words}. The class of all the scattered words on the alphabet $\Sigma$ is noted $\Sigma^{\textsf{scat}}$.

In the context of automata on linear orderings, two words $w_1$ and $w_2$ are said to be \emph{isomorphic} if there exists a bijection $\tau:|w_1|\to |w_2|$ that preserves both the symbols and the order: for all $j,k\in |w_1|$, we must have $w_1(j)=w_2(\tau(j))$ and $j<_{|w_1|}k\implies \tau(j)<_{|w_2|}\tau(k)$. In the remainder of this article, we do not distinguish isomorphic words, since the automata that we will introduce in Section~\ref{sec:automata} are not able to discriminate them. Note that the length of a word should therefore be an \emph{order type} rather than an ordering (see~\cite{BC07} for more details). In this work, we do not distinguish these two notions and only use the term ``linear ordering'', even though it is a slight abuse of language.

Given two words $w_1:K_1\to \Sigma$ and $w_2:K_2\to \Sigma$ (where $K_1$ and $K_2$ are assumed disjoint), the \emph{concatenation} of $w_1$ and $w_2$, denoted by $w_1\cdot w_2$ (or $w_1w_2$), is the word $w:K_1+K_2\to \Sigma$ defined as
\[w\bigl((k,n)\bigr)=\left\{\begin{array}{ll}
	w_1(k)\text{ if $n=1$}\\
	w_2(k)\text{ if $n=2$.}
\end{array}\right.\]
The concatenation can be considered as a particular case of a more general operation. Given a linear ordering $J$ and a set of words $\{w_j:K_j\to \Sigma\mid j\in J\}$, the concatenation of these words, denoted by
$\prod_{j\in J}w_j$,
is a word $w$ indexed by $\sum_{j\in J}K_j$, and such that $w\bigl((k, j)\bigr)=w_j(k)$.

Given a word $w$ indexed by $J$ and two cuts $c_1=(K_1,L_1)$ and $c_2=(K_2,L_2)$ of $J$ such that $c_1\leq_{\widehat{J}}c_2$, the word $w[c_1:c_2]$ is obtained by restricting $w$ to the linear ordering $L_1\cap K_2$. In the case where $c_1=c_2$, the word $w[c_1:c_2]$ is thus the empty word.

\begin{example}
    The word $w : \mathbb{N}+\mathbb{Z}+2 \to \{a,b\}$ that maps a single element of the part $\mathbb{Z}$ to the letter $b$ and the remaining elements to the letter $a$ is illustrated in Figure~\ref{fig:example-word}. Dots represent the elements of $\mathbb{N}+\mathbb{Z}+2$, vertical bars are its cuts, and the letters are the result of the mapping. Note that the value, in $\mathbb{Z}$, of the element mapped to $b$ is irrelevant since every possible choice leads to a word isomorphic to $w$.

\end{example}

\begin{figure}[h!]
    \centering
    \begin{tikzpicture}[on grid, auto]
        \foreach \x in {0,2,6,8,11.5,1,9,12.5} {
        \node at (\x, -.8) {$a$};
        }
        \foreach \x in {7} {
        \node at (\x, -.8) {$b$};
        }
        \foreach \i in {0,1,...,3} {
			\fill (\i-0.5,-1.2) rectangle (\i-0.45, 0.4);}
        \foreach \i in {0,1,2} {
        \node[circle, fill=black, inner sep=2pt] at (\i,0) {};}
        \node at (3.25,0) {$\dots$};
            
        \fill (4.5-0.5,-1.2) rectangle (4.5-0.45, 0.4);
        \node at (4.75,0) {$\dots$};
        \foreach \i in {0,1,...,4} {
			\fill (\i+5.5,-1.2) rectangle (\i+5.45, 0.4);}
        \foreach \i in {1,2,...,4} {
        \node[circle, fill=black, inner sep=2pt] at (\i+5,0) {};}
        \node at (10.25,0) {$\dots$};

        \fill (11,-1.2) rectangle (11-.05, 0.4);
        \foreach \i in {12,13} {
			\fill (\i,-1.2) rectangle (\i-.05, 0.4);}
        \foreach \i in {11,12} {
        \node[circle, fill=black, inner sep=2pt] at (\i+.5,0) {};}
    \end{tikzpicture}
    \caption{Example of word indexed by the linear ordering $\mathbb{N}+\mathbb{Z}+2$.}
    \label{fig:example-word}
\end{figure}
\subsection{A Few Facts About Ordinals}

An \emph{ordinal} is a linear ordering $\alpha$ such that any non-empty subordering of $\alpha$ has a least element. In particular, this implies that each element $j\in \alpha$ that is not the greatest element has a successor in $\alpha$. The class of all the ordinals is linearly ordered. Informally speaking, given two ordinals $\alpha$ and $\beta$, we have $\alpha<\beta$ if $\alpha$ is isomorphic to a strict initial segment of $\beta$. That is, if there exists $b\in \beta$ such that an order-preserving bijection exists between $\alpha$ and $\{j\in \beta\mid j<_\beta b\}$. We refer to~\cite{sierpinski} for a thorough and formal introduction to ordinals and cardinals.

In this work, ordinals mainly appear as cofinalities of linear orderings and are denoted by Greek letters. Given a linear ordering $J$, a subordering $I\subseteq J$ is \emph{cofinal} in $J$ if
for all $j\in J$, there exists $i \in I$ such that $j\leq i$.
The \emph{cofinality} of $J$, denoted by $\cof(J)$, is defined as the least ordinal $\xi$ such that there exists a function $f:\xi\to J$ whose image is cofinal in $J$. The \emph{coinitiality} of $J$ is the cofinality of $-J$. 

An ordinal is called \emph{regular} if it equals the cofinality of some linear ordering. Alternatively, a regular ordinal can be defined as an ordinal equal to its cofinality. It follows from the definition of cofinality that a regular ordinal is necessarily the least ordinal of its cardinality, i.e., a regular ordinal is always \emph{initial}. The first regular ordinals are $0$ ($0$ is the cofinality of the empty linear ordering), $1$ ($1$ is the cofinality of any linear ordering that has a greatest element), $\omega$ (the first infinite ordinal), $\omega_1$ (the first uncountable ordinal, i.e., the first ordinal of cardinality $\aleph_1$) and $\omega_2$ (the first ordinal of cardinality $\aleph_2$).

Later, we will often need to consider concatenations of the form $\prod_{j\in \alpha} w$ or $\prod_{j\in -\alpha} w$ where $\alpha$ is an ordinal and $w$ is a word. These two concatenations are denoted concisely by $w^\alpha$ and $w^{-\alpha}$, respectively. Note that this definition leads to the usual notion of $\omega$-power. Indeed, in the particular case where $\alpha=\omega$, we get $\prod_{j\in \omega} w=w^\omega$.
\subsection{Automata on Linear Orderings}\label{sec:automata}
Automata on linear orderings, introduced in~\cite{BC07} and extended in~\cite{BC06,epsilon-automata}, are finite-state machines capable of reading words indexed by linear orderings. They can be seen as a broad generalization of infinite-word automata and transfinite-word automata.
\begin{definition}
	\label{def:automata-old}
	An \emph{automaton on linear orderings} is a tuple ${\cal A} = \left( Q,
	\Sigma, \Delta, I, F\right)$ where
	\begin{itemize}
		\item
		$Q$ is a finite set of \emph{states},
		\item
		$\Sigma$ is a finite \emph{alphabet},
		\item
		$\Delta \subseteq (Q \times \Sigma \times Q) \,\cup\, (Q \times 2^Q) \,\cup\,
		(2^Q \times Q)$ is a \emph{transition relation},
		\item
		$I \subseteq Q$ is a set of \emph{initial states}, and
		\item
		$F \subseteq Q$ is a set of \emph{final states}.
	\end{itemize}
\end{definition}

The transition relation $\Delta$ is divided into three parts. The transitions of the form $(q_1,a,q_2)$, where $q_1,q_2\in Q$ and $a\in \Sigma$, correspond to the usual transitions in finite-word automata. They are called \emph{successor transitions}. Those of the form $(q,P)$, with $q\in Q$ and $P\in 2^Q$, are called \emph{right-limit} transitions. We often denote them by $q\to P$, and we say that $P$ is the \emph{right-limit set} of the transition. Symmetrically, the transitions of the form $(P,q)$ are \emph{left-limit} transitions, denoted by $P\to q$, and $P$ is said to be the \emph{left-limit set} of the transition.

Let us now define the semantics of automata on linear orderings.
\begin{definition}
	\label{def:semantics-old}
	A \emph{run} of ${\cal A}$
	reading a word $w$ of length $J$ is a function $\rho: \widehat{J} \rightarrow Q$ that satisfies the following conditions:
	\begin{itemize}
		\item
		$\rho(\widehat{J}_{\mathit{min}}) \in I$ and $\rho(\widehat{J}_{\mathit{max}}) \in F$.
		\item
		For every consecutive $c_1 = (K_1, L_1),\, c_2 = (K_2, L_2) \in
		\widehat{J}$, i.e., such that $K_2 = K_1 \cup \{ j \}$ for some $j \in
		J$, one has $(\rho(c_1), w(j), \rho(c_2)) \in \Delta$.
		\item
		For every $c \neq \jmin\in \widehat{J}$ without any predecessor, one has $\lim_{c^-} \rho \rightarrow \rho(c)
		\in \Delta$, where $\lim_{c^-} \rho = \{ q \in Q \mid \forall c_1 < c\ .\ 
		\exists c_2\ .\ c_1 < c_2 < c \,\wedge\, \rho(c_2) = q \}$.
		\item
		For every $c \neq \jmax\in \widehat{J}$ without any successor, one has $\rho(c) \rightarrow \lim_{c^+}
		\rho \in \Delta$, where $\lim_{c^+} \rho = \{ q \in Q \mid \forall
		c_1 > c \ . \ \exists c_2\ . \ c < c_2 < c_1 \,\wedge\, \rho(c_2) = q \}$.
	\end{itemize}
\end{definition}

Intuitively, a run $\rho$ reading a word $w$ of length $J$ must assign an initial state to the first cut and a final state to the last cut. Moreover, if two cuts $c_1$ and $c_2$ are consecutive, there must exist a successor transition from $\rho(c_1)$ to $\rho(c_2)$, and this successor transition must read the symbol comprised between $c_1$ and $c_2$. The name \emph{successor transition} refers to the fact that $c_2$ is the successor of $c_1$ in the ordering $\widehat{J}$. On the other hand, if a cut $c$ (different from the last one) does not have any successor, then there exist cuts in every right neighborhood of $c$. To leave this cut, $\rho$ must follow a right-limit transition $\rho(c)\to P$. This is only valid if $P$ is exactly equal to the \emph{right-limit} of $\rho$ at cut $c$, i.e., if $P$ is equal to the set of states visited by $\rho$ in every right neighborhood of $c$. The case where a cut $c$ (different from the first cut) does not have any predecessor is symmetrical: such a cut must be reached by a left-limit transition. Note that the semantics of the limit transitions is close to the Muller acceptance condition in infinite-word automata~\cite{muller}, in the sense that a right-limit transition $q\to P$ (resp. a left-limit transition $P\to q$) can only be followed if $P$ is exactly equal to the set of states visited in every right neighborhood (resp. every left neighborhood) of a cut.

An automaton on linear orderings $\A$ \emph{accepts} a word $w$ if there exists a run reading $w$ in $\cal A$. The class of all such words forms the \emph{language accepted} by $\cal A$. A language is said to be \emph{rational} if it is accepted by an automaton. In this work, we focus on rational languages of words indexed by scattered (countable and uncountable) linear orderings. An automaton accepting such a language is an \emph{automaton on scattered linear orderings}. All the automata considered in the remainder of this article will be of this kind. 

A \emph{path} of $\A$ is similar to a run, except that it is not forced to start in an initial state and to end in a final one. We say that a word $w$ is \emph{read} by $\A$ if there exists a path reading it in $\A$. In what follows, we will often need to state that a path $\pi$ reading a word $w$ of length $J$ starts in a state $p\in Q$, ends in a state $q\in Q$ and \emph{visits exactly} the set of states $R\subseteq Q$, i.e., \[p=\pi(\jmin),\;R=\bigl\{\pi(c)\mid c\in \widehat{J}\;\bigr\},\;q=\pi(\jmax).\] We use the notation $\pi: p\xrightarrow{R}q$ to denote this concisely. 

\begin{figure}[h]
	\centering
	\begin{subfigure}[h]{0.49\textwidth}
		\begin{tikzpicture}[on grid, auto, node distance = 2cm]
			\node[state, initial] (0) {$0$};
			\node[state] (1) [right = of 0, xshift=-0.5cm] {$1$};
			\node[state] (2) [below = of 1] {$2$};
			\node[state, accepting right] (3) [right = of 1, xshift=-0.5cm] {$3$};
			\node at (0, 1.2) {};
			\path
			(1) edge[bend left] node {$a$} (2)
			(2) edge[bend left] node {$b$} (1);
			\node at (-0.5, -1.6) {$0\to \{1,2\}$};
			\node at (-0.5, -2.2) {$\{1, 2\}\to 3$};
		\end{tikzpicture}
		\caption{}
		\label{fig:example-right-lim}
	\end{subfigure}
	\hfill
	\begin{subfigure}[h]{0.49\textwidth}
		\begin{tikzpicture}[on grid, auto, node distance = 2cm]
			\node[state, initial] (0) {$0$};
			\node[state] (1) [below = of 0] {$1$};
			\node[state, accepting right] (2) [right = of 0, xshift=-0.5cm] {$2$};
			\node at (0, 1.2) {};
			\node at (-2, 0) {};
			\path
			(0) edge[loop above] node {$a$} (0)
			(0) edge[bend left] node {$a$} (1)
			(1) edge[bend left] node {$a$} (0);
			\node at (2, -1.6) {$\{0\}\to 0$};
			\node at (2, -2.2) {$\{0, 1\}\to 2$};
		\end{tikzpicture}
		\caption{}
		\label{fig:example-cof-omega}
	\end{subfigure}
	\caption{Examples of automata on scattered linear orderings.}
\end{figure}
\begin{example}\label{ex:automata}
A state is marked initial with an incoming arrow without origin, and final with an outgoing arrow without destination. The automaton represented in Figure~\ref{fig:example-right-lim} accepts the bi-infinite word $(ab)^{-\omega}(ab)^\omega$ (since we do not distinguish isomorphic words, this is in fact the only accepted word). A run $\rho$ reading this word can be built as follows. For any cut $c$ immediately before a letter $a$, we set $\rho(c)=1$. Symmetrically, for any cut $c$ immediately before a letter $b$, we set $\rho(c)=2$. Finally, $\rho(\jmin)=0$ and $\rho(\jmax)=3$. We have $\lim_{\jmin^+}\rho=\{1,2\}=\lim_{\jmax^-}\rho$, hence the limit transitions $0\to \{1,2\}$ and $\{1,2\}\to 3$ can be followed to leave State $0$ and reach State $3$, respectively.
\end{example}
\begin{example}\label{ex:automata-uncountable}
The automaton depicted in Figure~\ref{fig:example-cof-omega} (adapted from~\cite{Bedon-ordinals}) accepts all the words $a^\alpha$ such that $\alpha$ is an ordinal (this follows directly from the absence of right-limit transitions) and $\cof(\alpha)= \omega$. Since the alphabet is unary in this case, we rather say that this automaton accepts all the ordinals of cofinality $\omega$. Such ordinals can be accepted by using the limit transition $\{0\}\to 0$ as many times as necessary, and by eventually following the limit transition $\{0,1\}\to 2$ to reach the last cut. However, if $\alpha$ does not have the cofinality $\omega$, then the limit transition $\{0, 1\}\to 2$ cannot be followed to reach State $2$:
\begin{itemize}
	\item If $\cof(\alpha)=0$, then $\alpha$ is empty. Hence it is not accepted (i.e., the empty word is not accepted).
	\item If $\cof(\alpha)=1$, then $\alpha$ has a greatest element. Therefore, $\alpha$ cannot be accepted since State $2$ does not have any incoming successor transition.
	\item If $\cof(\alpha)>\omega$, the situation is less intuitive. Assume that there exists a run $\rho$ accepting $\alpha$ with $\cof(\alpha)=\omega_1$. The arguments that follow also work if $\alpha$ has another uncountable cofinality (we will prove this formally in Section~\ref{sec:equivalence-relation}). Since State $2$ is the only final state and can only be reached by following $\{0,1\}\to 2$, one must have $\lim_{\alphamax^-}\rho=\{0,1\}$. The subordering $\widehat{\alpha}_1\subseteq \widehat{\alpha}$ of all the cuts mapped to $1$ by $\rho$ is thus  cofinal in $\widehat{\alpha}^*$. Given that $\cof(\widehat{\alpha}^*)=\omega_1$, this implies that $\cof(\widehat{\alpha}_1)=\omega_1$. 
	
	However, no transition can be taken from State $2$, hence the limit transition $\{0,1\}\to 2$ can be followed only once. This implies that $\widehat{\alpha}_1$ is isomorphic to a limit ordinal $\beta$ such that any $\gamma<\beta$ is not a limit ordinal. This leads to $\beta=\omega$, which is a contradiction since $\cof(\omega)=\omega\neq \omega_1$.
	\end{itemize}
The key element in this reasoning is the absence of a limit transition $\{0,1\}\to q$ with $q\in \{0,1\}$. In other words, after visiting infinitely often States $0$ and $1$, it is mandatory to exit $\{0,1\}$. The situation would be completely different if there was the additional limit transition $\{0, 1\}\to 0$, for instance. In this case, all the ordinals with a cofinality at least equal to $\omega$ would be accepted. It suggests that there are only two kinds of left-limit sets $P$ (the same classification applies to right-limit sets): those for which there exists $p\in P$ such that $P\to p\in \Delta$, and those that do not have this property and can only be used to read ordinals of cofinality $\omega$. The intuition is that automata on linear orderings are only able to distinguish two infinite cofinalities. We will formalize this idea in Sections~\ref{sec:constructible-words} and \ref{sec:equivalence-relation}, where we will prove that automata on scattered linear orderings are able to distinguish ordinals of cofinality greater than $\omega$ from ones of cofinality exactly equal to $\omega$, but cannot discriminate two different cofinalities beyond $\omega$. 

In fact, this example shows that automata on scattered linear orderings are not complementable. Indeed, there does not exist any automaton accepting exactly the orderings that are not ordinals, together with the ordinals of cofinality different from $\omega$. This comes from the observation that any automaton accepting an ordinal of cofinality greater than $\omega$ also accepts an ordinal of cofinality $\omega$. This property follows from~\cite[Theorem 2]{carton-emptiness}.
\end{example}
\section{Constructible Words}\label{sec:constructible-words}
We are now ready to define the notion of \emph{constructible} word over a given alphabet. Intuitively, a word is constructible if it can be obtained by applying some operators to the letters of the alphabet a finite number of times. Constructible words are convenient objects to work with, since they admit a finite representation (which is not the case for unrestricted words on scattered linear orderings, since there are uncountably many such words).
\begin{definition}\label{def:constructible}
	Let $\Sigma$ be a finite alphabet. The set of constructible words $\mathscr{C}$ on the alphabet $\Sigma$ is the minimal set of words (in the sense of inclusion) verifying the following properties:
	\begin{itemize}
		\item The words containing a single letter $\sigma\in \Sigma$ (i.e., the words indexed by a singleton) and the empty word $\varepsilon$ belong to $\mathscr{C}$.
		\item If $u,v\in \cons$, then $u\cdot v\in \cons$.
		\item If $u\in \cons$, then $u^\omega\in \cons$ and $u^{-\omega}\in\cons$.
		\item If $u\in \cons$, then $u^{\omega_1}\in \cons$ and $u^{-\omega_1}\in \cons$.
	\end{itemize}
\end{definition}
Notice that constructible words are a generalization of \emph{ultimately periodic words}, that are known to characterize $\omega$-regular languages~\cite{ultimately-periodic-words}. They also generalize the constructible words introduced by Carton to study the emptiness of automata on scattered linear orderings~\cite[Theorem 2]{carton-emptiness}. We show in Section~\ref{sec:constructible-words-characterization} that constructible words entirely characterize rational languages of words on scattered linear orderings. That is, two rational languages are equal if and only if they contain the same constructible words.

Observe that the only operators that are able to generate uncountable word lengths are the $\omega_1$-power and the reverse $\omega_1$-power. We will show in Section~\ref{sec:equivalence-relation} that uncountable cofinalities are not distinguishable by automata on scattered linear orderings (as already mentioned in Example~\ref{ex:automata-uncountable}).  Among all the uncountable cofinalities, we decided to choose the smallest one, i.e., $\omega_1$, as representative. Symmetrically, we chose $-\omega_1$ as representative of the reverse uncountable cofinalities.

\section{An Equivalence Relation on Words}\label{sec:equivalence-relation}
We have already mentioned several times that some words cannot be distinguished by an automaton on scattered linear orderings. We now formalize this by means of an equivalence relation on words. Intuitively, given an automaton $\A$, two words are equivalent if they can be read in the exact same ways in $\A$. This idea originates from Büchi's proof that $\omega$-regular languages are closed under complementation~\cite{Buc62,WolfgangThomas}. It was then adapted in~\cite{RispalFiniteRank} to automata on linear orderings (in the restricted case of finite-rank scattered words). 

Let $\A=(Q, \Sigma,\Delta,I,F)$ be an automaton on linear orderings. We say that two words $u$ and $v$ are \emph{equivalent for $\A$}, which we denote by $u\sim_{\A}v$, if for any $(p,R,q)\in Q\times 2^Q\times Q$, there exists a path $p\xrightarrow{R}q$ reading $u$ if and only if there exists a path $p\xrightarrow{R}q$ reading $v$.
When the automaton $\A$ is clear from the context, we also say that $u$ and $v$ are \emph{equivalent} and we write $u\sim v$ in place of $u\sim_\A v$.

The relation $\sim$ is an equivalence relation. The set of equivalence classes induced by $\sim$ is denoted by $S=\Sigma^{\textsf{scat}}/\sim$. We denote by $\varphi:\Sigma^{\textsf{scat}}\to S$ the function that maps a scattered word on the alphabet $\Sigma$ to its equivalence class. The equivalence class $\varphi(u)$ of a word $u$ can be entirely described by the set of all the triples $(p,R,q)\in Q\times 2^Q\times Q$ such that there exists a path $p\xrightarrow{R}q$ reading $u$. This set of triples is denoted by $\tilde\varphi(u)$, and can be considered as an unambiguous description of the equivalence class $\varphi(u)$. The set of words mapped to $s\in S$ by $\varphi$ is denoted by $\varphi^{-1}(s)$.

Since $Q$ is finite, there are only finitely many triples $(p,R,q)$. The number of sets of such triples is thus also finite, i.e., $S$ is necessarily a finite set.
\begin{example}\label{ex:equivalence}
	Consider again the automaton in Figure~\ref{fig:example-right-lim}. In this automaton, the words $\{(ab)^n\mid n\in \N_{>0}\}$ are all equivalent. For any $n>0$, we have \[\tilde\varphi\bigl((ab)^n\bigr)=\Bigl\{\bigl(1, \{1,2\},1\bigr)\Bigr\}\] since the word $(ab)^n$ can be read by a path $\pi$ if and only if $\pi$ is of the form $1\xrightarrow{\{1,2\}} 1$. Now consider the words $\{a^n\mid n\in \N_{>1}\}$. Again, all these words are equivalent. In fact, they cannot be read by any path, hence $\tilde\varphi(a^n)=\emptyset$ for any $n\in \N_{>1}$. Finally, notice that in this case, the language accepted by this automaton, which consists of the unique word $(ab)^{-\omega}(ab)^\omega$, is equal to $\varphi^{-1}(s)$ where $s$ is the equivalence class described by $\{(0, \{0,1,2,3\}, 3)\}$.
	
	However, given an arbitrary equivalence class $s\in S$, the language $\varphi^{-1}(s)$ is not rational in general. Consider the automaton in Figure~\ref{fig:example-cof-omega}. The language $L$ of all the words $u$ on the alphabet $\{a\}$ such that 
	\[\label{eq:ex2}
	\tilde\varphi(u)=\Bigl\{\bigl(0, \{0\}, 0\bigr),\bigl(0, \{0,1\}, 0\bigr),\bigl(1, \{0,1\}, 0\bigr) \Bigr\}
	\]
is not rational. Indeed, it contains all the words $a^\alpha$ where $\alpha$ is an ordinal such that:
	\begin{itemize}
		\item $\cof(\alpha)\neq 0$ (since the empty word cannot be read by a path visiting $\{0,1\}$),
		\item $\cof(\alpha)\neq 1$ (since any word with a last symbol can be read by a path $\pi: 0\xrightarrow{\{0,1\}} 1$, but $(0, \{0,1\}, 1)$ does not belong to $\tilde\varphi(u)$),
		\item and $\cof(\alpha)\neq \omega$ (given that any word $a^\alpha$ with $\cof(\alpha)=\omega$ can be read by a path $\pi: 0\xrightarrow{\{0,1,2\}} 2$, and $(0, \{0,1,2\}, 2)$ does not belong to $\tilde\varphi(u)$).
	\end{itemize}	
In fact, $L$ is equal to the set of words $a^\alpha$ where $\alpha$ is an ordinal such that $\cof(\alpha)>\omega$. As previously mentioned in Example~\ref{ex:automata-uncountable}, this language is not rational.
\end{example}

We are now ready to formalize the fact that automata on scattered linear orderings are not able to discriminate uncountable cofinalities. This is the purpose of Lemma~\ref{lemma:ordinal-powers-indistinguishable}. Given two regular uncountable ordinals $\alpha$ and $\beta$ and an automaton on scattered linear orderings $\A$, we prove that if there exists a path $\pi$ reading a word $w^\alpha$ in $\A$, then one can build another path $\pi'$ that reads $w^\beta$ and starts in the same state, ends in the same state and visits exactly the same states as $\pi$. Our strategy is to cut $\pi$ into pieces (Proposition~\ref{proposition:ordinal-limit-transitions}), to minimize them (Proposition~\ref{prop:countable-is-enough-for-reachability}) and finally to recompose them to build $\pi'$ (Proposition~\ref{prop:ordinal-powers-indistinguishable-prelim}). We start by two propositions related to ordinals. Their proofs are standard and we only give them for the sake of self-containment. They can be adapted to deal with reverse ordinals, by replacing the cofinality by the coinitiality.
\begin{proposition}\label{prop:limit-ordinals-cofinal-in-ordinal-of-uncountable-cofinalities}
	Let $\alpha$ be an ordinal with uncountable cofinality, i.e., with a cofinality at least equal to $\omega_1$. Then the set of limit ordinals strictly lower than $\alpha$ is cofinal in $\alpha$.
\end{proposition}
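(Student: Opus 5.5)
Given $\gamma<\alpha$, we want a limit ordinal $\lambda$ with $\gamma\le\lambda<\alpha$. The natural candidate is $\lambda=\gamma+\omega$, the supremum of the increasing sequence $\gamma,\gamma+1,\gamma+2,\dots$. Since it is the supremum of a strictly increasing $\omega$-sequence with no greatest element, $\gamma+\omega$ is a limit ordinal, and clearly $\gamma\le\gamma+\omega$. The only thing left to show is $\gamma+\omega<\alpha$.

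First, $\gamma+n<\alpha$ for every $n\in\N$, by induction on $n$. The base case is $\gamma<\alpha$. For the induction step, suppose $\gamma+n<\alpha$ but $\gamma+n+1=\alpha$. Then $\alpha$ has a greatest element $\gamma+n$, so $\cof(\alpha)=1$, contradicting $\cof(\alpha)\ge\omega_1$. Hence $\gamma+n+1<\alpha$.

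Next, suppose $\gamma+\omega=\alpha$, i.e.\ the supremum of the $\gamma+n$ equals $\alpha$. Then the map $f:\omega\to\alpha$, $n\mapsto\gamma+n$, has cofinal image in $\alpha$. By definition of cofinality this gives $\cof(\alpha)\le\omega$, again a contradiction. So $\gamma+\omega<\alpha$.

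This proves that the limit ordinals strictly below $\alpha$ are cofinal in $\alpha$. The one step needing care is the last one: excluding $\gamma+\omega=\alpha$ requires using that the cofinality is uncountable, not merely infinite. With cofinality exactly $\omega$ the argument fails, as it should, since $\alpha=\omega$ has no limit ordinals below it other than $0$. The reverse-ordinal version follows by applying the same argument to $-\alpha$ with coinitiality in place of cofinality.
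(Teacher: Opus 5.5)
Your proof is correct and follows essentially the same route as the paper. Both arguments take $\gamma+\omega$, the limit of $\gamma,\gamma+1,\gamma+2,\dots$, and rule out $\gamma+\omega>\alpha$ (which would make $\alpha$ a successor of cofinality $1$) and $\gamma+\omega=\alpha$ (which would give cofinality $\omega$); the paper additionally first picks an intermediate $\delta$ and treats separately the case where $\delta$ is already a limit, a step your direct use of $\gamma+\omega$ makes unnecessary.
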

\begin{proof}
	Let $\beta<\alpha$ be an ordinal. We need to find a limit ordinal $\gamma$ such that $\beta< \gamma < \alpha$. First, notice that it is possible to find $\delta$ such that $\beta < \delta < \alpha$, since $\alpha$ is a limit ordinal. If $\delta$ is a limit ordinal, we can set $\gamma=\delta$ and the proposition is proven correct. If $\delta$ is a successor ordinal, then consider the sequence \[\delta_0=\delta,\; \delta_1=\delta+1,\;\delta_2=\delta+2,\;\dots\]
	and define $\gamma$ as the least ordinal greater than $\delta_n$ for all $n\in \N$. This definition guarantees that $\gamma$ is a limit ordinal. Furthermore, $\gamma <\alpha$ since $\alpha$ has an uncountable cofinality (indeed, if $\gamma>\alpha$, then $\alpha$ would have the cofinality $1$, and if $\gamma=\alpha$, then $\gamma$ would have the cofinality $\omega$).
 \end{proof}
\begin{proposition}\label{prop:cofinal-subset-of-a-regular-ordinal}
	If $\alpha$ is a regular ordinal, then any subset of $\alpha$ cofinal in $\alpha$ is isomorphic to $\alpha$.
\end{proposition}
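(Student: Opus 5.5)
The plan is to use the fact that any subset of an ordinal is well-ordered, so it is isomorphic to a unique ordinal, and then to pin that ordinal down between two bounds. Let $X\subseteq\alpha$ be cofinal in $\alpha$, and let $\gamma$ be the order type of $X$. Let $f:\gamma\to X$ be the order isomorphism obtained by transfinite recursion, setting $f(\xi)$ to be the least element of $X$ strictly greater than all $f(\eta)$ with $\eta<\xi$. It then suffices to show $\gamma\le\alpha$ and $\alpha\le\gamma$.

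For the upper bound $\gamma\le\alpha$, I will show by transfinite induction that $f(\xi)\ge\xi$ for every $\xi<\gamma$. Here elements of $\alpha$ are identified with the ordinals below $\alpha$. The induction step is immediate: $f(\xi)$ is strictly greater than every $f(\eta)\ge\eta$ with $\eta<\xi$, so $f(\xi)\ge\xi$. Hence every $\xi<\gamma$ satisfies $\xi\le f(\xi)<\alpha$, which gives $\gamma\le\alpha$. Equivalently, a well-ordered set embeds order-preservingly into $\alpha$ only if its type is at most $\alpha$.

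For the lower bound $\alpha\le\gamma$, note that $f:\gamma\to\alpha$ has image $X$, which is cofinal in $\alpha$. By the definition of cofinality as the least ordinal admitting a function onto a cofinal subset, we get $\cof(\alpha)\le\gamma$. Since $\alpha$ is regular, $\cof(\alpha)=\alpha$, so $\alpha\le\gamma$.

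Combining the two inequalities gives $\gamma=\alpha$, so $X$ is isomorphic to $\alpha$. The only step needing any care is the upper bound, that is, the monotonicity fact $f(\xi)\ge\xi$ for an order embedding into an ordinal. Everything else follows directly from the definitions. One small point to check is that the paper's definition of cofinality quantifies over arbitrary functions rather than monotone ones. This causes no difficulty, because $f$ is in particular a function.
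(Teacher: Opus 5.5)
Your proof is correct and follows essentially the same route as the paper. The order type of a cofinal subset is at most $\alpha$, which the paper only asserts and you prove via $f(\xi)\ge\xi$. Regularity then rules out a strictly smaller type: the paper argues by contradiction through $\cof(\beta)=\cof(\alpha)$, whereas you apply the definition of cofinality directly to the enumeration $f$, which is slightly cleaner.
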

\begin{proof}
	Assume that $\beta\subseteq \alpha$ is cofinal in $\alpha$ but not isomorphic to $\alpha$. It is impossible for $\beta$ to be isomorphic to an ordinal strictly greater than $\alpha$. Hence $\beta$ is isomorphic to $\gamma<\alpha$. Subsequently, the cofinality of $\beta$ is strictly smaller than $\alpha$ (since a regular ordinal is the smallest ordinal of its cofinality). This contradicts the fact that $\alpha$ and $\beta$ must have the same cofinality, given that $\beta$ is cofinal in $\alpha$.
 \end{proof}

Given a word $w$ and an ordinal $\alpha$, we say that a cut of the word $w^\alpha$ is \emph{external} (that is, external to $w$) if it separates $w^\alpha$ into two factors $w^{\alpha_1}$ and $w^{\alpha_2}$ (with $\alpha_1$ and $\alpha_2$ two ordinal numbers, possibly equal to $0$, and such that $\alpha=\alpha_1+\alpha_2$).
	\begin{proposition}\label{proposition:ordinal-limit-transitions}
		Let $\A$ be an automaton on linear orderings and let $\pi$ be a path reading a word $w^\alpha$ in $\A$, where $w$ is a non-empty word and $\alpha$ is an uncountable regular ordinal. Let $P\to q$ be the limit transition followed by $\pi$ to reach the last cut of $w^\alpha$. Then, there exists a state $p$ and an ordering $(c_i)_{i\in \alpha}$ of external cuts of $w^\alpha$ such that:
		\begin{itemize}
			\item $p\in P$ and $P\to p$ is a limit transition of $\A$,
			\item $\pi(c_i)=p$ for any $i\in \alpha$,
			\item and for any $i<j\in \alpha$, $\pi$ visits $P$ between $c_i$ and $c_j$,
		\end{itemize}
	\end{proposition}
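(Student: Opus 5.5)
The plan is a pigeonhole argument on the external cuts at limit positions, followed by a pumping-style closure argument, with Propositions~\ref{prop:limit-ordinals-cofinal-in-ordinal-of-uncountable-cofinalities} and~\ref{prop:cofinal-subset-of-a-regular-ordinal} supplying the order-theoretic facts. Identify the external cuts of $w^\alpha$ with the ordinals $\beta\le\alpha$: the external cut number $\beta$ separates $w^\beta$ from the remaining suffix. Since $w$ is non-empty, an external cut indexed by a limit ordinal $\beta$ with $0<\beta<\alpha$ has no predecessor. By definition of $P$, there is a cut $c^*<\jmax$ such that every cut strictly between $c^*$ and $\jmax$ is mapped by $\pi$ into $P$. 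Moreover, each $q'\in P$ is visited in every left neighbourhood of $\jmax$.

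\textbf{Step 1 (many limit cuts after $c^*$).} Let $\beta_0<\alpha$ be such that the external cut $\beta_0$ lies after $c^*$; such a $\beta_0$ exists because the external cuts are cofinal in $\widehat{J}^*$. By Proposition~\ref{prop:limit-ordinals-cofinal-in-ordinal-of-uncountable-cofinalities}, the set $\Lambda$ of limit ordinals $\lambda$ with $\beta_0<\lambda<\alpha$ is cofinal in $\alpha$. Each such external cut $\lambda$ has no predecessor, so $\pi$ reaches it by a left-limit transition $\lim_{\lambda^-}\pi\to\pi(\lambda)$, where $\pi(\lambda)\in P$ and $\lim_{\lambda^-}\pi\subseteq P$.

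\textbf{Step 2 (pigeonhole).} Colour each $\lambda\in\Lambda$ by the pair $(\lim_{\lambda^-}\pi,\pi(\lambda))$. There are finitely many colours and $\alpha$ is regular uncountable, so some colour class $\Lambda'$ is cofinal in $\alpha$. Otherwise $\alpha$ would be a finite union of bounded sets, and hence bounded. By Proposition~\ref{prop:cofinal-subset-of-a-regular-ordinal}, $\Lambda'$ is order-isomorphic to $\alpha$, which gives the enumeration $(c_i)_{i\in\alpha}$ of external cuts. Call the common colour $(P',p)$, so that $P'\to p\in\Delta$, $p\in P$, and $\pi(c_i)=p$ for all $i$.

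\textbf{Step 3 (main obstacle: $P'=P$ and every segment visits all of $P$).} First, a further thinning is needed. For $i<j$, let $V_{ij}$ be the set of states visited between $c_i$ and $c_j$. Every $V_{ij}$ is contained in $P$, and $V_{ij}$ increases when the interval is enlarged. Let $q'\in P$. Since $q'$ is visited cofinally near $\jmax$ and $\Lambda'$ is cofinal, for each $i$ there is some $j>i$ with $q'\in V_{ij}$.

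Define $f(i)$ to be the least $j>i$ such that $V_{ij}=P$. This $j$ exists: apply the previous observation to each of the finitely many $q'\in P$ and take the maximum of the resulting indices. Now build an increasing $\alpha$-sequence by transfinite recursion: take $i_{\xi+1}=f(i_\xi)$, and at limit stages $\xi$ take the supremum of the earlier $i_\eta$. At limit stages of countable cofinality, this supremum stays below $\alpha$ because $\alpha$ is regular. Re-index $(c_{i_\xi})_{\xi\in\alpha}$.

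The remaining difficulty is that the supremum index $i_\xi$ may not lie in $\Lambda'$. To avoid this, apply the recursion only along successor steps, or simply restrict to successor-indexed elements $i_{\xi+1}$. These successor-indexed elements still form a cofinal subset of $\alpha$, hence by Proposition~\ref{prop:cofinal-subset-of-a-regular-ordinal} they are again of type $\alpha$.

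For $i<j$ in this final sequence, the interval from $c_i$ to $c_j$ contains an interval of the form $[c_{i'},c_{f(i')}]$, so $\pi$ visits all of $P$ between $c_i$ and $c_j$. This gives the third bullet of the statement.

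Finally, for the first bullet, we show $P'=P$ for the colour $(P',p)$. Take any element $c_j$ of the final sequence whose index has infinitely many predecessors in the sequence, for instance $j=\omega$ in the re-indexing, which exists since $\alpha\ge\omega_1$. Intervals $[c_i,c_{i+1}]$ for sequence elements $i<j$ lie arbitrarily close to the left of $c_j$, and each of them visits all of $P$. Hence $\lim_{c_j^-}\pi\supseteq P$, and since every cut after $c^*$ is mapped into $P$, $\lim_{c_j^-}\pi=P$. As $c_j\in\Lambda'$, this limit set equals $P'$, so $P'=P$. Therefore $P\to p$ is a limit transition of $\A$ with $p\in P$, which completes the proof.
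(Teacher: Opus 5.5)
\paragraph{Verdict: genuine gap in the final claim $P'=P$.}
Your Steps 1--2 and the construction of $f$ are sound, and they do give the second and third bullets and $p\in P$. The gap is the last paragraph. You claim that the intervals between earlier members of your final sequence lie arbitrarily close to the left of $c_j$, and this is false.

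The sequence lives inside the colour class $\Lambda'$, which need not contain the supremum in $\widehat{J}$ of any of its subsets. Keeping only the successor-indexed elements makes this worse: the member at position $\omega$ is $c_{i_{\omega+1}}$, and $c_{i_{\omega+1}}>c_{i_\omega}\geq c_{i_{n+1}}$ for all $n$. So the earlier intervals accumulate at or below $c_{i_\omega}$, not at $c_j$. The difficulty you noticed actually points the other way: the index $\sup_\eta i_\eta$ always lies in $\alpha$, but the cut $c_{\sup_\eta i_\eta}$ need not equal $\sup_\eta c_{i_\eta}$.

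More fundamentally, every member of $\Lambda'$ has left-limit set $P'$ by construction. So no thinning of $\Lambda'$ can prove $P'=P$, and a cofinal colour class with $P'\subsetneq P$ may exist. Here is an example:
\begin{itemize}
\item Take $w=a$ and $\alpha=\omega_1$.
\item Successor transitions: $(y,a,x)$, $(x,a,x)$, $(z,a,x)$.
\item Left-limit transitions: $\{x\}\to y$, $\{x,y\}\to z$, $\{x,y,z\}\to z$, $\{x,y,z\}\to q$.
\item The path maps the first cut to $y$, every successor cut to $x$, each cut $\omega\cdot(\delta+1)$ to $y$, each cut $\omega\cdot\gamma$ with $\gamma>0$ a limit to $z$, and the last cut to $q$.
\end{itemize}
Then $P=\{x,y,z\}$, and the cuts $\omega\cdot(\delta+1)$ form a cofinal colour class of colour $(\{x\},y)$. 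If Step~2 picks this class, your argument concludes that $\{x,y,z\}\to y$ is a transition, which it is not. The statement does hold here, but with $p=z$.

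\paragraph{How to fix it.}
The missing idea is to secure the left-limit set \emph{before} pigeonholing, by working with suprema.
\begin{itemize}
\item Run your recursion over all external cuts after $c^*$, not over $\Lambda'$.
\item At successor stages, move far enough that all of $P$ is visited.
\item At limit stages, take the supremum. It is again an external cut strictly below $\jmax$ because $\alpha$ is regular. This holds at every limit stage $\xi<\alpha$, not only at those of countable cofinality.
\end{itemize}
The resulting family $D$ is cofinal, has type $\alpha$, and exactly $P$ is visited between any two of its members. Each limit-stage member is the supremum of the earlier members. Hence it has no predecessor and its left-limit set is exactly $P$. By Proposition~\ref{prop:limit-ordinals-cofinal-in-ordinal-of-uncountable-cofinalities}, these members are cofinal. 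Only now should you pigeonhole, and on the target state alone.

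This is the order of steps in the paper: $C^P$, then $C^{P,\mathit{lim}}$, then $C^{P\to p}$. Note that the paper's $C^P$ also has to be closed under such suprema for its claim that the cuts of $C^{P,\mathit{lim}}$ are reached with limit set $P$. The recursive construction provides this closure.
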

	\begin{proof}
		Let $J$ be the length of $w^\alpha$. The idea of the proof is to build a chain of suborderings of $\widehat{J}$ that are all cofinal in $\widehat{J}^*$ and isomorphic to $\alpha$. The last element of this chain will be the ordering $(c_i)_{i\in \alpha}$.
		
		First, let $C\subseteq \widehat{J}^*$ be the set of non-extremal cuts external to $w$. The ordering $C$ is cofinal in $\widehat{J}^*$ and isomorphic to $\alpha$. Next, define $C^P\subseteq C$ as an ordering cofinal in $C$ and such that $\pi$ visits exactly $P$ between any two cuts in $C^P$. The existence of this ordering and the fact that it is cofinal in $C$ follow from the assumption that $\pi$ reaches the final cut of $w^\alpha$ thanks to the limit transition $P\to q$. Given that $C^P$ is cofinal in an ordering that is isomorphic to the regular ordinal $\alpha$, $C^P$ is itself isomorphic to $\alpha$ (thanks to Proposition~\ref{prop:cofinal-subset-of-a-regular-ordinal}).
		
		Secondly, define the ordering $C^{P,\mathit{lim}}\subseteq C^P$ as the set of cuts in $C^P$ that do not have a predecessor in $C^P$. By Proposition~\ref{prop:limit-ordinals-cofinal-in-ordinal-of-uncountable-cofinalities}, $C^{P,\mathit{lim}}$ is cofinal in $C^P$ (since the cofinality of $C^P$ is $\alpha$, i.e., is uncountable). By Proposition~\ref{prop:cofinal-subset-of-a-regular-ordinal}, $C^{P,\mathit{lim}}$ is thus isomorphic to $\alpha$. Figure~\ref{fig:example-C^P} shows some cuts of $C^P$. Between any two such cuts, $\pi$ visits exactly $P$. The bigger vertical bar represents a cut in $C^{P,\mathit{lim}}$.  
		\begin{figure}[h!]
			\centering
			\begin{tikzpicture}
				\node at (5.25, 0) {$\dots$};
				 
				\foreach \i in {-1,...,1} {
					\node at (\i+3,0) {};
					\fill (\i+3-0.5,-0.2) rectangle (\i+3-0.45, 0.2);
					\path (\i+3-0.5+0.1,0) edge node[above] {$P$} (\i+1+3-0.5-0.05, 0);
					
				}
				\fill (2+3-0.5,-0.2) rectangle (2+3-0.45, 0.2);
				\fill (0+6,-0.4) rectangle (0.05+6, 0.4);

			\end{tikzpicture}
			\caption{Several cuts of $C^P$ and a cut of $C^{P,\mathit{lim}}$.}
			\label{fig:example-C^P}
		\end{figure}
		
		Each of the cuts in $C^{P,\mathit{lim}}$ (except maybe the first one) is reached by $\pi$ thanks to a left-limit transition involving $P$ as limit set. There are finitely many such limit transitions. Hence there exists a state $p$ and an ordering $C^{P\to p}\subseteq C^{P,\mathit{lim}}$ cofinal in $C^{P,\mathit{lim}}$ (and thus isomorphic to $\alpha$ by Proposition~\ref{prop:cofinal-subset-of-a-regular-ordinal}) such that $\pi$ follows the limit transition $P\to p$ to reach each cut in $C^{P\to p}$. Note that we necessarily have $p\in P$, since the set of states visited between two cuts in $C^{P\to p}$ is exactly $P$. Therefore, the ordering $C^{P\to p}$ satisfies all the properties enumerated in the proposition.
		\end{proof}
	
\begin{proposition}\label{prop:countable-is-enough-for-reachability}
	Let $\A$ be an automaton on scattered linear orderings with set of states $Q$. Let $\alpha\neq 0$ be an ordinal number and let $w$ be a word. Assume that there exists a path $\pi:p\xrightarrow{R}q$ reading $w^\alpha$ in $\A$, where $p,q\in Q$ and $R\in 2^Q$. Then, there exists a path $\pi': p\xrightarrow{R}q$ and a countable ordinal number $\alpha'\neq 0$ such that $\pi'$ reads $w^{\alpha'}$ in $\A$.
\end{proposition}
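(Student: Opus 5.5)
We proceed by transfinite induction on $\alpha$, proving the slightly stronger statement with the same data $(p,R,q)$. The path $\pi$ restricted to external cuts gives a sequence of states $(\pi(c_i))_{i\le\alpha}$, where $c_i$ is the external cut separating $w^{i}$ from the rest. Between two consecutive external cuts $c_i,c_{i+1}$ the path reads one copy of $w$, so it is a path $\pi(c_i)\xrightarrow{R_i}\pi(c_{i+1})$ reading $w$. The whole path is therefore determined up to replacement of these blocks by an $\alpha$-indexed sequence of states, a sequence of such blocks, and the limit transitions used at limit external cuts. The goal is to shrink this skeleton to a countable ordinal.

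\textbf{Successor case} $\alpha=\beta+1$. Split $\pi$ at $c_\beta$ into a path $p\xrightarrow{R_1}r$ reading $w^\beta$ and a path $r\xrightarrow{R_2}q$ reading $w$, with $R=R_1\cup R_2$. If $\beta=0$ there is nothing to do. Otherwise, apply induction to obtain a countable $\beta'\neq0$ and a path $p\xrightarrow{R_1}r$ reading $w^{\beta'}$. Glue the two paths at $r$ to get $w^{\beta'+1}$.

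\textbf{Limit case.} Let $P\to q$ be the left-limit transition used at the last cut.
\begin{itemize}
\item If $\mathrm{cf}(\alpha)=\omega$, pick external cuts $c_{i_0}<c_{i_1}<\dots$ cofinal in $\alpha$ such that $\pi$ visits exactly $P$ after $c_{i_0}$ and between any two of them. Each segment $[c_{i_n}:c_{i_{n+1}}]$ reads $w^{\gamma_n}$ with $\gamma_n\neq0$, and the initial segment reads $w^{i_0}$. Apply induction to each piece, treating $i_0=0$ trivially. Each piece keeps its endpoints and visited set, so concatenating them gives $w^{\alpha'}$ with $\alpha'=i_0'+\sum_n\gamma_n'$ countable. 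The limit set at the last cut is still exactly $P$: every tail visits all of $P$ (each piece after $i_0$ visits exactly $P$) and nothing else. Hence $P\to q$ still applies.

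\item If $\mathrm{cf}(\alpha)$ is uncountable, first reduce to $\alpha$ regular by restricting to a cofinal copy of $\mathrm{cf}(\alpha)$. Concretely, $w^\alpha = w^{\delta}\cdot\prod_{j\in\kappa}w^{\gamma_j}$ with $\kappa=\mathrm{cf}(\alpha)$. Apply Proposition~\ref{proposition:ordinal-limit-transitions} (its proof only uses the regularity of the index of a cofinal family of external cuts) to obtain a state $p'\in P$ with $P\to p'\in\Delta$ and external cuts $(c_i)_{i\in\kappa}$ with $\pi(c_i)=p'$, where $\pi$ visits exactly $P$ between any two of them. Keep only $c_{0}$ and $c_{1}$. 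The segment before $c_0$ is handled by induction and yields a countable power ending in $p'$. The segment between $c_0$ and $c_1$ is a loop $p'\xrightarrow{P}p'$ reading $w^{\gamma}$ with $\gamma\neq0$; shrink it by induction to $w^{\gamma'}$ with $\gamma'$ countable. Now iterate this loop $\omega$ times. At each limit point we need the left-limit transition $P\to p'$, and it holds. At the end, the limit set is exactly $P$, so the original $P\to q$ closes the path. Since all visited states of the loop lie in $P\subseteq R$, we get $p\xrightarrow{R}q$ reading $w^{\alpha'}$ with $\alpha'=\delta'+\gamma'\cdot\omega$ countable. The visited set is exactly $R$, because the pre-$c_0$ segment together with $P$ and $q$ cover everything $\pi$ visited.
\end{itemize}

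The main obstacle is the uncountable-cofinality case: the loop at $p'$ must be closable by $P\to p'$ at every $\omega$-limit. This is exactly what Proposition~\ref{proposition:ordinal-limit-transitions} provides. A secondary point is bookkeeping that the visited set stays exactly $R$ in every case; it holds because each shortened piece preserves its own visited set, and the remaining states are witnessed in the retained pieces. The induction is well founded because every piece to which the hypothesis is applied indexes a strictly smaller nonzero ordinal.
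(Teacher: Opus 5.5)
Your argument is correct, and it takes a genuinely different route from the paper's.

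\textbf{The paper's route.} The paper does not induct on $\alpha$. It builds an auxiliary automaton $\B$ over a one-letter alphabet. Its states in $Q\times 2^Q\times 2^Q$ record the current state of $\A$, the states visited so far, and the states to be visited while reading the next copy of $w$. The intent is that $\B$ accepts $\sigma^\beta$ exactly when $w^\beta$ is read by a path $p\xrightarrow{R}q$. It then appeals to Carton's emptiness test, which returns an accepted word of finite rank, i.e.\ $\sigma^{\alpha'}$ with $\alpha'<\omega^\omega$.

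\textbf{Your route.} You pump down by hand: transfinite induction on $\alpha$, cutting $\pi$ at external cuts, shrinking each piece by the induction hypothesis, and regluing. The regluing is the same local check as in Lemma~\ref{lemma:equivalence-relation}.

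\textbf{Comparison.} The paper's route is short and gives the uniform bound $\alpha'<\omega^\omega$. However, it rests on an external theorem and on a simulation that is delicate at limit cuts. At a limit external cut, the limit set of $\A$ is the union of the third components $U_i$ of the $\B$-states occurring cofinally, not the set $\{q_1,\dots,q_\ell\}$ of their first components. The side condition $U_i\subseteq\{q_1,\dots,q_\ell\}$, as written, forces these two sets to coincide, since $q_i\in U_i$. So paths in which some state recurs cofinally only strictly inside copies of $w$ are not captured. Your route is self-contained and elementary, but it only gives countability of $\alpha'$.

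\textbf{A simplification.} In your uncountable-cofinality case, the transition $P\to p'$ is never actually used. The loop is iterated only $\omega$ times, so the boundaries between iterations are indexed by successor ordinals. Each of them is handled on both sides by the loop path itself. The only limit boundary is the last cut, which is closed by $P\to q$. Hence neither Proposition~\ref{proposition:ordinal-limit-transitions} nor its extension to non-regular $\alpha$ is needed:
\begin{itemize}
\item take an external cut after which $\pi$ only visits states of $P$;
\item then take $|Q|+1$ later external cuts such that all of $P$ is visited between consecutive ones;
\item by pigeonhole, two of them carry the same state $p'$, which gives a loop $p'\xrightarrow{P}p'$ reading some $w^\gamma$ with $0<\gamma<\alpha$.
\end{itemize}
This treats every limit $\alpha$ uniformly, cofinality $\omega$ included.

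Finally, you should set aside $w=\varepsilon$ at the start. There the external cuts collapse, and Proposition~\ref{proposition:ordinal-limit-transitions} assumes $w$ non-empty. The case is trivial, since $w^\alpha=w^1$.
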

\begin{proof}
	We claim that there exists an automaton $\B$ that accepts a word indexed by the ordinal $\beta$ if and only if $w^\beta$ can be read in $\A$ by a path starting in $p$, ending in $q$ and visiting exactly $R$. Consider the following candidate $(Q_\B, \Sigma_\B, \Delta_\B, I_\B, F_\B)$ for $\B$: 
	\begin{itemize}
		\item $Q_\B=Q\times 2^Q\times 2^Q$. Intuitively, the first component of a state $q_\B\in Q_\B$ tracks the current state of $\A$, the second one tracks the set of states of $\A$ already visited (which by convention includes the current state), and the last one tracks the set of states of $\A$ that needs to be visited during the next reading of $w$ (which necessarily exists, unless the entire word $w^\beta$ has already been read in $\A$). 
		\item $\Sigma_\B=\{\sigma\}$. The symbol $\sigma$ replaces the reading of $w$ in $\A$. 
		\item The successor and limit transitions of $\B$ enforce the semantics of $Q_\B$ and $\Sigma_\B$. Formally, they are defined as follows:
		\begin{itemize}
			\item Given $(q_1,Q_1,U_1), (q_2, Q_2,U_2)\in Q_\B$ such that $q_1\in Q_1$ and $q_2\in Q_2$, the successor transition from $(q_1, Q_1,U_1)$ to $(q_2, Q_2,U_2)$ belongs to $\Delta_\B$ if and only if the word $w$ can be read in $\A$ by a path starting in $q_1$, ending in $q_2$ and visiting exactly $U_1$, with the additional requirement that $Q_1\cup U_1=Q_2$. 
			\item Given $\{(q_1,Q_1,U_1), (q_2, Q_2,U_2),\dots, (q_\ell, Q_\ell,U_\ell)\}\subseteq Q_\B$ and $(q_{\ell+1},Q_{\ell+1},U_{\ell+1})\in Q_\B$, the left-limit transition \[\Bigl\{(q_1,Q_1,U_1), (q_2, Q_2,U_2),\dots, (q_\ell, Q_\ell,U_\ell)\Bigr\}\to (q_{\ell+1},Q_{\ell+1}, U_{\ell+1})\] belongs to $\Delta_\B$ if and only if the left-limit transition $\{q_1,q_2,\dots,q_\ell\}\to q_{\ell+1}$ exists in $\A$, with the additional requirements that $\{q_{\ell+1}\}\cup\bigcup_{1\leq i\leq \ell} Q_i=Q_{\ell+1}$ (to satisfy the semantics of $Q_{\ell+1}$) and $U_i\subseteq \{q_1,q_2,\dots,q_\ell\}$ for any $1\leq i\leq \ell$ (to make sure that no reading of $w$ in $\A$ visits a state that is not in the limit set $\{q_1,q_2,\dots,q_\ell\}$).
		\end{itemize}
		\item $I_\B=\{(p, \{p\}, \{p\}\cup U)\mid U\in 2^Q\}$ and $F_\B=\{(q, R, U)\mid U\in 2^Q\}$. 
	\end{itemize}
	The semantics of the states, the choice of initial and final states and the selection of the successor transitions guarantees that $\sigma^\beta$ is accepted by $\B$ if and only if $w^\beta$ can be read in $\A$ by a path starting in $p$, ending in $q$ and visiting $R$.
	
	By assumption, there exists a path $\pi:p\xrightarrow{R}q$ reading the word $w^\alpha$ in $\A$. Therefore, $\B$ accepts $\sigma^\alpha$. The language accepted by $\B$ is thus non-empty. Following the principles of the emptiness test developed in \cite{carton-emptiness}, one can find a word accepted by $\B$ that has finite rank. This word takes the form $\sigma^{\alpha'}$ where $\alpha'$ is an ordinal less than $\omega^\omega$, hence countable. Finally, by definition of $\B$, the word $w^{\alpha'}$ can be read in $\A$ by a path $\pi':p\xrightarrow{R}q$.
	 \end{proof}

	\begin{proposition}\label{prop:ordinal-powers-indistinguishable-prelim}
		Let $\alpha$ and $\beta$ be two regular uncountable ordinals. Let $w$ be a scattered word and let $\A$ be an automaton on scattered linear orderings with set of states $Q$. If there exists a path $\pi:q_1\xrightarrow{R}q_2$ reading $w^\alpha$, then there also exists a path $\pi':q_1\xrightarrow{R}q_2$ reading $w^\beta$.
	\end{proposition}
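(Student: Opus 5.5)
We would follow the strategy announced before Proposition~\ref{proposition:ordinal-limit-transitions}: cut $\pi$ into pieces, minimize them, and recompose a path reading $w^\beta$. If $w$ is empty, then $w^\alpha=w^\beta=\varepsilon$ and there is nothing to prove, so assume $w$ is non-empty.

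\textbf{Decomposition.} Let $P\to q_2$ be the left-limit transition used by $\pi$ to reach the last cut of $w^\alpha$. Proposition~\ref{proposition:ordinal-limit-transitions} gives a state $p\in P$ with $P\to p\in\Delta$, together with external cuts $(c_i)_{i\in\alpha}$ such that $\pi(c_i)=p$ and $\pi$ visits exactly $P$ between any $c_i<c_j$. (Exactly $P$, not just within $P$: the construction takes the $c_i$ inside $C^P$.) Let $c_0$ be the first of these cuts. Then $\pi$ splits into three parts:
\begin{itemize}
\item a prefix $\pi_0:q_1\xrightarrow{R_0}p$ reading $w^{\alpha_0}$, the factor before $c_0$, where $\alpha_0<\alpha$;
\item a piece $\pi_1:p\xrightarrow{P}p$ reading $w^{\gamma}$ for some $\gamma\neq 0$, namely the factor between $c_0$ and $c_1$;
\item the final limit transition $P\to q_2$.
\end{itemize}
We have $R=R_0\cup P\cup\{q_2\}$.

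\textbf{Minimization and recomposition.} By Proposition~\ref{prop:countable-is-enough-for-reachability}, replace $\pi_1$ by a path $\pi_1':p\xrightarrow{P}p$ reading $w^{\gamma'}$ with $\gamma'$ countable and non-zero. If $\alpha_0\neq 0$, the same proposition replaces $\pi_0$ by $\pi_0':q_1\xrightarrow{R_0}p$ reading $w^{\alpha_0'}$ with $\alpha_0'$ countable. Now build $\pi'$ on $w^{\alpha_0'}\cdot\prod_{i\in\beta}w^{\gamma'}$ as follows:
\begin{itemize}
\item run $\pi_0'$ on the prefix;
\item run a copy of $\pi_1'$ on each block $w^{\gamma'}$, so every block boundary gets state $p$;
\item map the last cut to $q_2$.
\end{itemize}
Since $\beta$ is uncountable regular and $\alpha_0',\gamma'$ are countable ordinals, $\alpha_0'+\gamma'\cdot\beta=\beta$. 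Hence this word is exactly $w^\beta$.

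\textbf{Verifying the limit conditions.} This is the main obstacle. Cuts inside a block, or inside the prefix, are handled by the original paths. Cuts strictly inside a single copy of $\pi_1'$ also keep the same limit sets, since each copy is the original path on that block. The delicate cuts are the block boundaries indexed by limit ordinals $i\in\beta$, and the last cut.
\begin{itemize}
\item \emph{Limit block boundaries.} Every left neighbourhood of such a cut contains whole blocks, each visiting exactly $P$. The blocks contain no states outside $P$, and a whole block lies inside any left neighbourhood. So the left-limit set is exactly $P$, and the transition $P\to p$ exists. These cuts have a predecessor only if $\gamma'$ is a successor ordinal. Otherwise they have none, and the transition $P\to p$ is exactly what the semantics requires there.
\item \emph{The last cut.} The same argument gives left-limit $P$, so $P\to q_2$ applies.
\item \emph{Right limits.} At a boundary, the right limit is determined within the next block's copy of $\pi_1'$, which starts in $p$. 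So it coincides with the right limit that $\pi_1'$ already satisfies at its first cut.
\end{itemize}
Finally, the visited set is $R_0\cup P\cup\{q_2\}=R$. If $\alpha_0=0$, the prefix is simply empty and $q_1=p$.
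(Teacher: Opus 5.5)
Your proposal is correct and is essentially the paper's proof. It uses the same decomposition via Proposition~\ref{proposition:ordinal-limit-transitions}, the same minimization via Proposition~\ref{prop:countable-is-enough-for-reachability}, and the same recomposition via $\alpha_0'+\gamma'\cdot\beta=\beta$; the only difference is that you treat $\alpha_0=0$ directly, whereas the paper re-indexes the $c_i$. One harmless slip: a block boundary indexed by a limit ordinal $\lambda$ never has a predecessor, whatever $\gamma'$ is, because $\prod_{i<\lambda}w^{\gamma'}$ has no last letter, so $P\to p$ is always exactly the transition required there.
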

	\begin{proof}
        We denote by $J$ the length of $w$. If $w$ is the empty word, $w^\alpha=w^\beta$, hence the proposition is proven correct. We thus assume that $w$ is not the empty word. The idea of the proof is to analyze $\pi$, and then to build a path $\pi'$ reading $w^\beta$ that keeps the properties of $\pi$.
		
		First, notice that since $\alpha$ is regular and uncountable, it is an initial ordinal greater or equal to $\omega_1$. Therefore it is necessarily a limit ordinal. The path $\pi$ thus reaches the last state $q_2$ via a limit transition $P\to q_2$ where $P\subseteq Q$. Then, there exists a state $p\in P$ and a sequence of cuts $(c_i)_{i\in \alpha}$ with the properties enumerated in Proposition~\ref{proposition:ordinal-limit-transitions}. Let $\jmin$ be the first cut of $w^\alpha$. We assume without loss of generality that $c_0\neq \jmin$ (if $c_0=\jmin$, then redefine the sequence $(c_i)_{i\in \alpha}$ by setting $c_0$ to $c_1$, $c_1$ to $c_2$, etc.). Let us define $\pi_1$ as the restriction of $\pi$ to the prefix $w[\jmin:c_0]$, and $\pi_2$ as the restriction of $\pi$ to the subword $w[c_0:c_1]$. From Proposition~\ref{proposition:ordinal-limit-transitions}, we infer the following properties of $\pi_1$ and $\pi_2$ (together with the fact that $P\to p$ is a limit transition of $\A$):
		\begin{itemize}
			\item $\pi_1$ reads $w^{\theta_1}$ for some ordinal number $\theta_1>0$, starts in $q_1$, ends in $p$ and visits exactly some set of states $U\subseteq Q$,
			\item $\pi_2$ reads $w^{\theta_2}$ for some ordinal number $\theta_2>0$, starts in $p$, ends in $p$ and visits exactly $P$,
			\item and $U\cup P \cup \{q_2\} = R$.
		\end{itemize}
		Now let us apply Proposition~\ref{prop:countable-is-enough-for-reachability} twice (once for $\pi_1$ and once for $\pi_2$). This yields a path $\pi_1': q_1\xrightarrow{U} p$ reading a word $w^{\theta_1'}$ and a path $\pi_2': p\xrightarrow{P}p$ reading a word $w^{\theta_2'}$, where $\theta_1',\theta_2'$ are countable ordinal different from $0$.
		
		We are now ready to build a path $\pi':q_1\xrightarrow{R}q_2$ reading $w^\beta$. First, notice that $\beta>\theta_1'>0$ and $\beta>\theta_2'>0$ (since $\beta$ is uncountable and $\theta_1',\theta_2'$ are countable), and $\beta$ is an initial ordinal (since it is regular). This implies that $\theta_2'\cdot \beta=\beta$ and $\theta_1'+\beta=\beta$ (see e.g.~\cite[Exercise 3.39]{Ros82}). In terms of words, this means that 
		\[w^\beta=w^{\theta_1'}\cdot (w^{\theta_2'})^\beta.\]
		Now consider the path $\pi'$ that first follows $\pi_1'$ to read $w^{\theta_1'}$, then follows $\beta$ times $\pi'_2$ to read each copy of $w^{\theta_2'}$, and eventually reaches the last cut of $w^\beta$ via the limit transition $P\to q_2$. This path reads $w^\beta$, starts in $q_1$, visits $R$ and ends in $q_2$. Moreover, it can be checked that this path is valid. Any cut of the word $(w^{\theta_2'})^\beta$ that is external to $w^{\theta_2'}$ and that does not have a predecessor that is external can be reached thanks to the limit transition $P\to p$ (or $P\to q_2$ for the last cut). 
	 \end{proof}
	\begin{lemma}\label{lemma:ordinal-powers-indistinguishable}
		Let $\alpha$ and $\beta$ be two regular uncountable ordinals. For any automaton on scattered linear orderings $\A$ and for any scattered word $w$, the words $w^\alpha$ and $w^\beta$ are equivalent for $\A$. Symmetrically, $w^{-\alpha}$ and $w^{-\beta}$ are also equivalent for $\A$.
	\end{lemma}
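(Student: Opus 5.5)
The first claim follows directly from Proposition~\ref{prop:ordinal-powers-indistinguishable-prelim} applied in both directions. By definition of $\sim_\A$, we must show that for every triple $(p,R,q)\in Q\times 2^Q\times Q$, a path $p\xrightarrow{R}q$ reading $w^\alpha$ exists if and only if one reading $w^\beta$ exists. Proposition~\ref{prop:ordinal-powers-indistinguishable-prelim}, applied with the pair $(\alpha,\beta)$, turns any path $p\xrightarrow{R}q$ reading $w^\alpha$ into a path $p\xrightarrow{R}q$ reading $w^\beta$. The same proposition with the roles of $\alpha$ and $\beta$ swapped gives the converse, since its hypotheses are symmetric in the two regular uncountable ordinals. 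Hence $\tilde\varphi(w^\alpha)=\tilde\varphi(w^\beta)$, that is, $w^\alpha\sim_\A w^\beta$.

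For the reverse powers I would use a mirroring argument instead of redoing the construction. Given $\A=(Q,\Sigma,\Delta,I,F)$, define the reversed automaton $\overleftarrow{\A}$ with the same states. It has a successor transition $(q_2,a,q_1)$ for each $(q_1,a,q_2)\in\Delta$, a right-limit transition $q\to P$ for each left-limit transition $P\to q$ of $\A$, and symmetrically a left-limit transition for each right-limit one. Initial and final states are swapped, although they play no role for paths.

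If $\pi:p\xrightarrow{R}q$ reads a word $u$ of length $J$ in $\A$, then the same map, viewed on $\widehat{-J}$ (which is $\widehat{J}$ reversed, with $(K,L)$ sent to $(L,K)$), is a path $q\xrightarrow{R}p$ reading the mirror word $\overleftarrow{u}$ in $\overleftarrow{\A}$. Under this reversal, left limits of $\pi$ become right limits, consecutive cuts stay consecutive with reversed order, and the set of visited states is unchanged. The correspondence is a bijection between paths, and mirroring preserves scatteredness.

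Applying the first part to $\overleftarrow{\A}$ and the word $\overleftarrow{w}$, we get $\overleftarrow{w}^{\alpha}\sim_{\overleftarrow{\A}}\overleftarrow{w}^{\beta}$. The mirror of $\overleftarrow{w}^{\alpha}$ is $w^{-\alpha}$, since reversing $\prod_{j\in\alpha}\overleftarrow{w}$ gives $\prod_{j\in-\alpha}w$. Transporting paths back through the bijection, the triple $(p,R,q)$ is realised on $w^{-\alpha}$ in $\A$ if and only if $(q,R,p)$ is realised on $\overleftarrow{w}^\alpha$ in $\overleftarrow{\A}$. That holds if and only if $(q,R,p)$ is realised on $\overleftarrow{w}^\beta$ in $\overleftarrow{\A}$, which in turn holds if and only if $(p,R,q)$ is realised on $w^{-\beta}$ in $\A$.

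The only point needing care is checking that reversal commutes with the run semantics of Definition~\ref{def:semantics-old}. Specifically, $\lim_{c^-}$ in $J$ must equal $\lim_{c^+}$ in $-J$, and the extremal-cut exemptions must swap correctly. Both are immediate from the definitions. Alternatively, as the paper remarks, Propositions~\ref{proposition:ordinal-limit-transitions}--\ref{prop:ordinal-powers-indistinguishable-prelim} can be redone with coinitiality in place of cofinality, using right-limit transitions.
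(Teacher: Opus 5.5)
Your proposal is correct. For $w^\alpha$ versus $w^\beta$ it matches the paper exactly: you apply Proposition~\ref{prop:ordinal-powers-indistinguishable-prelim} in both directions to get $\tilde\varphi(w^\alpha)=\tilde\varphi(w^\beta)$.

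For the reverse powers you take a genuinely different route. The paper only appeals to mirrored versions of Propositions~\ref{prop:limit-ordinals-cofinal-in-ordinal-of-uncountable-cofinalities}--\ref{prop:ordinal-powers-indistinguishable-prelim}, with coinitiality instead of cofinality and right-limit instead of left-limit transitions. That implicitly asks the reader to redo each argument, including the auxiliary automaton $\B$ and the appeal to Carton's emptiness test in Proposition~\ref{prop:countable-is-enough-for-reachability}.

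You instead establish one duality and reuse the forward case verbatim:
\begin{itemize}
\item the map $(K,L)\mapsto(L,K)$ identifies $\widehat{-J}$ with $\widehat{J}$ reversed;
\item this turns a path $p\xrightarrow{R}q$ on $u$ in $\A$ into a path $q\xrightarrow{R}p$ on $\overleftarrow{u}$ in $\overleftarrow{\A}$, with $\lim_{c^-}$ becoming $\lim_{c^+}$ and the extremal-cut exemptions swapping;
\item the identity $\overleftarrow{\overleftarrow{w}^{\alpha}}=w^{-\alpha}$ lets you apply the forward result to $\overleftarrow{\A}$ and the still-scattered word $\overleftarrow{w}$.
\end{itemize}

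The trade-off is this. The paper's version needs no new construction and is shorter to state, but it leaves the verification across four propositions implicit. Yours is more modular: the only thing to check is that reversal commutes with Definition~\ref{def:semantics-old}, which is a routine local verification. Once checked, the same duality would give mirrored versions of any other left/right-asymmetric statement for free.
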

	\begin{proof}
		It follows from the fact that Proposition~\ref{prop:ordinal-powers-indistinguishable-prelim} is symmetric regarding $\alpha$ and $\beta$. Hence, the set of triples $(q_1,R,q_2)\in Q\times 2^Q\times Q$ that characterize the equivalence class of $w^\alpha$ are the same as those characterizing the equivalence class of $w^\beta$, i.e., $w^\alpha$ and $w^\beta$ are in the same equivalence class.
		
		The proof that $w^{-\alpha}$ and $w^{-\beta}$ are equivalent for $\A$ is essentially the same. It relies on the mirrored versions of Propositions~\ref{prop:limit-ordinals-cofinal-in-ordinal-of-uncountable-cofinalities} to \ref{prop:ordinal-powers-indistinguishable-prelim}.
	 \end{proof}

The next step is to understand how the equivalence classes induced by $\sim_\A$ behave with respect to the concatenation of words. Lemma~\ref{lemma:equivalence-relation} addresses this question. It will be used intensively in Section~\ref{sec:constructible-words-characterization}.

\begin{lemma}\label{lemma:equivalence-relation}
	Let $\A$ be an automaton on scattered linear orderings and let $I$ be a non-empty scattered linear ordering. Let $\{u_i\mid i\in I\}$ and $\{v_i\mid i\in I\}$ be two sets of words such that $u_i\sim_{\cal A} v_i$ for all $i\in I$. Then the words $u$ and $v$ respectively defined as $\bigl(\prod_{i \in I}u_i\bigr)$ and $ \bigl(\prod_{i \in I} v_i\bigr)$ are equivalent for $\A$.
\end{lemma}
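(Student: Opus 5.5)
By symmetry of $\sim_\A$, it suffices to show one inclusion. Assume there is a path $\pi:p\xrightarrow{R}q$ reading $u=\prod_{i\in I}u_i$; we build a path $\pi':p\xrightarrow{R}q$ reading $v=\prod_{i\in I}v_i$. Write $J_i=|u_i|$, $K_i=|v_i|$, $J=\sum_{i\in I}J_i$ and $K=\sum_{i\in I}K_i$. The idea is to keep the behaviour of $\pi$ on the cuts lying between the blocks and to replace each block path by an equivalent one.

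\textbf{Step 1 (block decomposition).} Each $i\in I$ determines two cuts of $J$: the cut $\ell_i$ just before the block $J_i$ and the cut $r_i$ just after it. The restriction of $\pi$ to $u[\ell_i:r_i]=u_i$ is a path $\pi(\ell_i)\xrightarrow{R_i}\pi(r_i)$ reading $u_i$, where $R_i$ is the set of states it visits. Since $u_i\sim_\A v_i$, we can choose a path $\pi_i':\pi(\ell_i)\xrightarrow{R_i}\pi(r_i)$ reading $v_i$.

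\textbf{Step 2 (gluing).} Every cut of $K$ is either internal to a block $K_i$, or external, i.e.\ of the form $(\sum_{i\in A}K_i,\sum_{i\in B}K_i)$ for a cut $(A,B)$ of $I$. The same holds for $J$. This gives an order-preserving bijection $e$ between the external cuts of $J$ and those of $K$; the only subtlety is empty blocks, where $\ell_i=r_i$ and the two endpoints coincide. Define $\pi'$ as $\pi_i'$ on the cuts of block $K_i$, and as $\pi\circ e^{-1}$ on the external cuts. This is well defined because $\pi_i'$ has the same endpoints as $\pi$ on the block. The first and last cuts are external, so $\pi'$ starts in $p$ and ends in $q$. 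It visits $\bigcup_i R_i$ together with the states of the external cuts, which is exactly $R$.

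\textbf{Step 3 (validity).} Successor transitions are fine: two consecutive cuts of $K$ either lie in one block (handled by $\pi_i'$), or frame a letter that belongs to some block, so both are cuts of that block. Limit conditions at internal cuts of a block depend only on a neighbourhood inside that block, where $\pi'$ coincides with the valid path $\pi_i'$.

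The main obstacle is the limit condition at an external cut $c'=e(c)$ without predecessor (the right-limit case is symmetric). I will show that $\lim_{c'^-}\pi'=\lim_{c^-}\pi$, so that the transition used by $\pi$ at $c$ is also available at $c'$. There are two cases.
\begin{itemize}
\item If $c'$ is the right end $r_i$ of some block $K_i$ with no predecessor inside it, the left neighbourhoods of $c'$ lie within $K_i$, where $\pi'=\pi_i'$, and $\pi_i'$ is a valid path ending at $\pi(r_i)$. This gives the condition directly.
\item Otherwise, the cut $(A,B)$ of $I$ is approached from the left by blocks indexed by $A$, with $A$ having no maximum and all blocks near the cut empty. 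The limit set of $\pi$ at $c$ is then $\bigcap_{a\in A}\bigcup_{a<i\in A}(R_i\cup\{\text{external states}\})$. The limit set of $\pi'$ at $c'$ is given by the same expression, because each block contributes the same visited set $R_i$ and the external cuts carry the same states.
\end{itemize}
Checking that these two expressions really agree, with care over empty blocks and over cofinal families of external cuts, is the point needing the most care. Since $I$ is scattered and each $K_i$ is scattered, $K$ is scattered, so $\pi'$ is a path of the automaton on scattered linear orderings. This proves $u\sim_\A v$.
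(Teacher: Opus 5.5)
Your construction is the paper's: restrict $\pi$ to each block, swap in an equivalent block path with the same endpoints and visited set, keep $\pi$ on the external cuts, and compare limit sets at the external limit cuts. The step you defer, empty blocks, is a genuine gap, and it cannot be closed. If $u_i=\varepsilon\neq v_i$ (or conversely) for infinitely many $i$, the external cuts of $u$ and of $v$ are no longer both isomorphic to $\widehat{I}$, the bijection $e$ of Step~2 does not exist, and the statement itself fails. Here is a counterexample:
\begin{itemize}
\item Take states $p,q$, successor transitions $(p,a,p)$ and $(q,a,q)$, and the single left-limit transition $\{p\}\to q$.
\item Then $\tilde\varphi(a)=\{(p,\{p\},p),(q,\{q\},q)\}=\tilde\varphi(\varepsilon)$, so $a\sim\varepsilon$.
\item Let $I=\omega$, $u_i=a$ and $v_i=\varepsilon$. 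The word $u=a^\omega$ is read by the path $p\xrightarrow{\{p,q\}}q$ that is $p$ on every finite cut and $q$ on the last cut.
\item But $v=\varepsilon$ is read only by paths $s\xrightarrow{\{s\}}s$, so $u\not\sim v$.
\end{itemize}

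So you need an extra hypothesis: all $u_i,v_i$ are non-empty, or $u_i=\varepsilon$ exactly when $v_i=\varepsilon$, in which case those indices can be dropped from $I$. The paper's proof assumes this silently when it says that the external cuts of $v$ form an ordering isomorphic to $\widehat{I}$. This is harmless for its applications, where the factors of the infinite products are non-empty.

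Under this hypothesis, $e$ is the composite of the two isomorphisms with $\widehat{I}$, and your argument goes through once the second case of Step~3 is stated correctly. That case is simply the one where $A$ is non-empty and has no maximum; drop the phrase about the blocks near the cut being empty. The cuts $r_a$, $a\in A$, are then cofinal below $c'$ in $\widehat{K}$ and below $c$ in $\widehat{J}$. The states visited between $r_a$ (included) and the limit cut (excluded) are $\bigcup_{a<i\in A}R_i$ together with the states at the external cuts in that range, and these are the same for $\pi'$ and $\pi$. This gives $\lim_{c'^-}\pi'=\lim_{c^-}\pi$.

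In your first case you are not actually proving this equality, and it can fail there, e.g.\ when $u_i$ has a last letter and $v_i$ does not. What is needed in that case is the validity of $\pi_i'$, and that part of your argument is correct.
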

\begin{proof}
	In this proof, given two paths $\pi$ and $\tau$ in $\A$, we write $\pi\equiv\tau$ if both paths start in the same state, end in the same state, and visit exactly the same set of states. Moreover, we denote by $J_i$ the length of $v_i$, for any $i\in I$.
	
	Since $u_i\sim_\A v_i$, the existence of a path $\pi_i$ reading $u_i$ in $\A$ implies that there also exists a path $\tau_i\equiv \pi_i$ reading $v_i$ in $\A$. The idea of the proof is thus to consider a path $\pi$ reading $u=\prod_{i\in I}u_i$ in $\A$ and then to split it into a set of paths $\{\pi_i\mid i\in I\}$ such that $\pi_i$ reads $u_i$. From this, we can infer the existence of a set of paths $\{\tau_i\mid i\in I\}$ such that $\tau_i$ reads $v_i$ and $\tau_i\equiv \pi_i$, and finally build a path $\tau\equiv \pi$ reading $v$ in $\A$. By symmetry of the problem, this is sufficient to show that $u$ and $v$ are equivalent for $\A$.
	
	We now build explicitly the path $\tau$ and show that this path is valid. To do so, we give the value of $\tau(c)$ for each cut $c$ of $v$, depending on the properties of $c$:
	\begin{itemize}
		\item If $c$ is internal to some word $v_i$ (that is, if $c\in \widehat{J_i}^*$ for some $i\in I$), then $\tau(c)=\tau_i(c)$.
		\item Otherwise, $c$ belongs to the set $C$ of the external cuts of $v$ (that is, the cuts that are not internal to some $v_i$). In fact, $C$ can be seen as a linear ordering isomorphic to $\widehat{I}$. We distinguish four cases:
		\begin{itemize}
			\item $c$ has a successor and a predecessor in $C$. In that case, $c$ is the last cut of $J_i$ and the first cut of $J_{i+1}$, where $i$ and $i+1$ are two consecutive elements of $I$. Then we set $\tau(c)=\tau_i(c)=\tau_{i+1}(c)$. Note that this last equality holds because the last state of $\tau_i$ is equal to the first state of $\tau_{i+1}$ (given that the last state of $\pi_i$ is equal to the first state of $\pi_{i+1}$).
			\item $c$ has a successor but no predecessor in $C$. In that case, $c$ is the first cut of $J_i$ for some $i\in I$, and $i$ does not have any predecessor in $I$. We set $\tau(c)=\tau_i(c)$.
			\item Symmetrically, if $c$ has a predecessor but no successor in $C$, then $c$ is the last cut of $J_i$ for some $i\in I$, and $i$ does not have any successor in $I$. As in the previous case, we set $\tau(c)=\tau_i(c)$.
			\item Finally, if $c$ does not have a predecessor nor a successor in $C$, then there does not exist any $i$ such that $c\in \widehat{J_i}$. Therefore, we simply set $\tau(c)=\pi(c)$.
		\end{itemize}
		Notice that in fact, we set $\tau(c)=\pi(c)$ for any $c\in C$. The above discussion might seem unnecessary but is actually meant to prove in an easier way that $\tau$ is a valid path. Given that each $\tau_i$ is a valid path for any $i\in I$, we only need to check that an external cut $c$ of $v$ that does not have any predecessor (resp. successor) can be reached by a left-limit transition (resp. a right-limit transition). This is always the case, since each path $\tau_i$ visits exactly the same states as $\pi_i$, for all $i\in I$. Therefore, $\lim_{c^-}\tau=\lim_{c^-}\pi$ and $\lim_{c^+}\tau=\lim_{c^+}\pi$ for any external cut $c$. The path $\tau$ can thus reach (resp. leave) cut $c$ with the same left-limit transition (resp. right-limit transition) as $\pi$, given that $\tau(c)=\pi(c)$.
	\end{itemize}
    This concludes the proof.
\end{proof}
In the particular case where $\alpha=2$, Lemma~\ref{lemma:equivalence-relation} actually shows that it makes sense to define a binary product operation between the equivalence classes induced by $\sim_\A$. Given two equivalence classes $s_1$ and $s_2$, their product $s_1\cdot s_2$ is defined as the equivalence class of any word $u_1\cdot u_2$ such that $u_1$ belongs to $s_1$ and $u_2$ to $s_2$. Equivalently,
\begin{equation}\label{eq:phi-morphism}\varphi(u_1)\cdot \varphi(u_2)=\varphi(u_1\cdot u_2).\end{equation}
for any $u_1,u_2\in \Sigma^{\textsf{scat}}$.
Notice that the product we defined on $S$ is an associative operation, since
\[\varphi(u_1)\cdot \bigl(\varphi(u_2)\cdot \varphi(u_3)\bigr)=\varphi(u_1\cdot u_2\cdot u_3)=\bigl(\varphi(u_1)\cdot \varphi(u_2)\bigr)\cdot \varphi(u_3).\]
Therefore, the set $S$ containing the equivalence classes induced by $\sim_\A$ has a finite-semigroup structure, since it is a finite set equipped with a binary product operation that is associative. Moreover, it follows from Equation~\ref{eq:phi-morphism} that $\varphi:\Sigma^\textsf{scat}\to S$ is a \emph{semigroup morphism} from $\Sigma^\textsf{scat}$ (which can be seen as an infinite semigroup where the associative operation is the concatenation) to the finite semigroup $S$.

The links between automata and semigroups are well studied in the literature. In~\cite{Ris05}, rational languages of words indexed by countable scattered linear orderings are shown to be exactly the languages \emph{recognizable} by a special form of finite semigroups. That is, a language $L\subseteq \Sigma^\diamond$ (where $\Sigma^\diamond$ is the class of all the words on the alphabet $\Sigma$ indexed by countable scattered linear orderings) is rational if and only if it can be written as $\bigcup_{t\in T}\varphi^{-1}(t)$ where $T$ is a subset of a finite semigroup $S$ and $\varphi:\Sigma^\diamond\to S$ is a semigroup morphism (i.e., $\varphi$ satisfies Equation~\ref{eq:phi-morphism}).
This result generalizes the correspondence between regular languages of finite words and finite semigroups~\cite{eilenberg}. It also proves that rational languages of words indexed by countable scattered linear orderings are closed under complementation.

These results do not extend to rational languages of scattered words (countable and uncountable). Given a finite semigroup $S$ and a semigroup morphism $\varphi:\Sigma^{\textsf{scat}}\to S$, the languages $\varphi^{-1}(s)$ (where $s\in S$) are not rational in general, hence the equivalence between rationality and recognizability by semigroups does not hold in our settings. If it were the case, then automata on scattered linear orderings would be complementable. Nevertheless, it is still possible to exploit the fact that the set of equivalence classes induced by $\sim_\A$ has a finite-semigroup structure. This is the objective of Section~\ref{sec:constructible-words-characterization}.

\section{Constructible Words Characterize Rational Languages of Scattered Words}\label{sec:constructible-words-characterization}

In this section, we prove that the language accepted by an automaton on scattered linear orderings $\A$ is characterized by the constructible words accepted by $\A$ (Theorem~\ref{thm:constructible-words-scattered}). We first prove Theorem~\ref{lemma:an-equivalence-class-contains-a-constructible-word}, stating that for any non-empty equivalence class $s$ induced by $\sim_\A$, the language $\varphi^{-1}(s)$ contains a constructible word (despite being non-rational in general). The proof of this result relies on the application of a theorem of Colcombet~\cite{Colcombet-ramseyan-splits}. We introduce this theorem in Section~\ref{sec:colcombet} and apply it to our problem in Section~\ref{sec:application}.

\begin{remark}
    In the restricted case of automata on countable scattered linear orderings, Theorem~\ref{lemma:an-equivalence-class-contains-a-constructible-word} follows immediately from the fact that each equivalence class is a rational language~\cite[Proposition 27]{Ris05}, and any non-empty rational language contains a countable constructible word~\cite[Theorem 2]{carton-emptiness}. Alternatively, the fact that the semigroups introduced in~\cite{Ris05} (as an algebraic model equivalent to automata on countable scattered linear orderings) can be described by an algebra whose operators are the concatenation, the $\omega$-power and the reverse $\omega$-power, yields another proof of Theorem~\ref{lemma:an-equivalence-class-contains-a-constructible-word} in the restricted case of countable scattered words. In this restricted case, automata can be complemented~\cite{Ris05}, hence Theorem~\ref{thm:constructible-words-scattered} can also be directly proven by adapting the proof of the characterization of the $\omega$-regular languages by their ultimately periodic words~\cite[Fact 1]{ultimately-periodic-words}.
\end{remark}
\subsection{A Ramsey-like Theorem for Semigroups}\label{sec:colcombet}
The theorem presented in this section comes from \cite{Colcombet-ramseyan-splits}. It can be seen as an extension of Ramsey's theorem~\cite{Ramsey} to graphs whose edges are colored according to a semigroup structure. Its proof relies on the axiom of choice.

Let $J$ be a linear ordering and let $S$ be a finite semigroup. The product of two elements $s_1,s_2\in S$ is denoted by $s_1\cdot s_2$. A \emph{multiplicative labeling} $\sigma$ is a function that maps each pair $(i,j)\in J\times J$ such that $i<j$ to an element of $S$, and such that
\[\sigma(i,j)\cdot \sigma(j,k)=\sigma(i,k)\]
for all $i<j<k\in J$. In other words, if we consider $J$ as an ordered set of vertices of a complete graph, the multiplicative labeling $\sigma$ is a coloring of the edges of this graph, where the colors satisfy a semigroup structure (which is not the case in the classical infinite form of Ramsey's theorem~\cite{Ramsey}). 

A \emph{split} of $J$ is a function $r:J\to \{1,2,\dots, N\}$ where $N\in \N$ is called the \emph{height} of the split. Given a split $r$ and an integer $m\in \{1,2,\dots,N\}$, two elements $i,j\in J$ are said to be \emph{$m$-neighbors} if $r(i)=r(j)=m$ and $r(k)\leq m$ for all $k$ between $i$ and $j$. The relation ``being $m$-neighbors'' is in fact an equivalence relation, for any $m\in \{1,2,\dots, N\}$. An \emph{$m$-class} is an equivalence class induced by this equivalence relation. A split $r:J\to \{1,2,\dots, N\}$ is \emph{Ramseyan} for the multiplicative labeling $\sigma$ if for any $m\in \{1,2,\dots, N\}$ and for any $m$-class $X$, there exists an idempotent element $e\in S$ such that $\sigma(i,j)=e$ for all $i<j\in X$ (an element $e\in S$ is \emph{idempotent} if $e\cdot e=e$). 
\begin{theorem}[Colcombet \cite{Colcombet-ramseyan-splits}]\label{thm:colcombet}
    For any linear ordering $J$, for any finite semigroup $S$ and for any multiplicative labeling $\sigma:J\times J\to S$, there exists a split of $J$ of height at most $2|S|$ that is Ramseyan for $\sigma$. 
\end{theorem}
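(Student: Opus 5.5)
Since Theorem~\ref{thm:colcombet} is quoted from~\cite{Colcombet-ramseyan-splits}, this is only a sketch of how I would reprove it. The plan is an induction on $|S|$ using Green's relations, proving the slightly stronger statement that a Ramseyan split of height at most $2|S|$ exists for every linear ordering $J$ at once.

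\textbf{Reduction to a subsemigroup.} Fix a $\leq_{\mathcal J}$-maximal $\mathcal J$-class $D$ of $S$. The first step is the observation that $S\setminus D$, if non-empty, is a subsemigroup. Indeed, if $x,y\notin D$ and $xy\in D$, then $D\leq_{\mathcal J} x$, and maximality of $D$ forces $x\in D$, a contradiction. Consequently, on any convex subset $K\subseteq J$ where no pair $i<j$ of $K$ is labelled in $D$, the restriction of $\sigma$ takes values in a strictly smaller semigroup. The induction hypothesis then gives a Ramseyan split of height at most $2(|S|-1)$ there.

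\textbf{Top two levels.} I would reserve the levels $2|S|-1$ and $2|S|$ for handling $D$. Call a point $j$ \emph{$D$-active} if there is $i<j$ with $\sigma(i,j)\in D$. For $i<j<k$ with $\sigma(i,k)\in D$, stability of finite semigroups gives $\sigma(i,k)\mathrel{\mathcal R}\sigma(i,j)$ and $\sigma(i,k)\mathrel{\mathcal L}\sigma(j,k)$ whenever those two labels are also in $D$. So the $\mathcal L$-class (resp.\ $\mathcal R$-class) of the labels ending (resp.\ starting) at a $D$-active point is well defined along the ordering. I would use this to partition $J$ into maximal convex blocks. Inside a block, consecutive chosen points have labels lying in a single $\mathcal H$-class of $D$. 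When $D$ is regular, that $\mathcal H$-class is a group in which, after a suitable choice of points, the label is the idempotent. Points separating blocks get level $2|S|$, and the selected points inside blocks get level $2|S|-1$. The remaining convex gaps have all internal labels outside $D$ (or $D$ is non-regular, in which case such blocks are trivial), and I recurse into them with the induction hypothesis.

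\textbf{Main obstacle.} I expect the hardest step to be making the choice of level-$(2|S|-1)$ points work for arbitrary infinite $J$, possibly uncountable, dense or with limit cuts. The difficulty is that a group-valued labeling inside a block has no canonical idempotent-labelled subset. My first attempt would be to fix a well-ordering of $J$ (this is where the axiom of choice enters) and select points greedily by transfinite recursion, keeping the invariant that all chosen points are pairwise labelled by the idempotent $e$ of the group. I would then use Zorn's lemma to take a maximal such set and show it is ``dense enough'' that the gaps between chosen points carry labels outside $D$ or only the trivial product. If that fails, the fallback is to label points by the group element $\sigma(j_0,j)$ relative to a fixed base point $j_0$ and use $g^{-1}$-translations to normalise.

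Finally I would check the Ramseyan condition level by level. Levels coming from the recursion stay inside their gaps, because the higher levels separate them. The top two levels give idempotent-constant classes by construction. The height count is $2+2(|S|-1)=2|S|$.
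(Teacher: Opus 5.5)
The paper does not prove this statement: it quotes it from Colcombet and uses it as a black box. Your sketch therefore has to stand on its own, and it has a genuine gap.

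Your preliminary facts are correct: $S\setminus D$ is a subsemigroup, and stability makes the $\mathcal{R}$- and $\mathcal{L}$-classes of the $D$-labels at a point well defined. The problem is the architecture ``$\leq_{\mathcal{J}}$-maximal class $D$ on the top two levels, induction on $S\setminus D$ below'', which is inverted. For $i\leq i'<j'\leq j$ one has $\sigma(i,j)\leq_{\mathcal{J}}\sigma(i',j')$, and $D$ is maximal, so every sub-interval of a $D$-labelled interval is again $D$-labelled. This has two consequences.
\begin{itemize}
\item In a Ramseyan split the top level is a single class, since no point can separate two points of level $2|S|$. So all your ``points separating blocks'' must be pairwise labelled by one common idempotent. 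Nothing in your construction controls the labels between them, and for distant points those labels lie outside $D$.
\item Two selected points $c<c'$ of a block have $\sigma(c,c')\in D$, so every pair inside $[c,c']$ is $D$-labelled. Your claim that the remaining gaps have all internal labels outside $D$ is therefore false exactly where the unselected points of a block need levels. The induction hypothesis for $S\setminus D$ cannot supply those levels.
\end{itemize}
The high levels of a Ramseyan split are the ones that see long intervals, hence $\mathcal{J}$-low labels. For the same reason, a Zorn-maximal $e$-labelled set cannot leave gaps labelled outside $D$. Also, the ``partition into maximal convex blocks'' is not well defined, because the relation $\sigma(x,y)\in D$ is not transitive. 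For example, in $\{a,a^2,0\}$ with $a^3=0$ and $\sigma(i,i+1)=a$ on $\omega$, the maximal convex sets of pairwise $D$-labelled points are the overlapping pairs $\{i,i+1\}$.

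Independently of this, two levels cannot suffice for $D$, so the count $2+2(|S|-1)$ is unattainable. Take the group $S=\mathbb{Z}/3\mathbb{Z}$, which is a single $\mathcal{J}$-class. Then $S\setminus D=\emptyset$ and your scheme only ever assigns two levels. Let $J=\omega$ and $\sigma(i,j)=(j-i)\bmod 3$.
\begin{itemize}
\item With only two levels, the top points form one class, so they are pairwise congruent modulo $3$ and at distance at least $3$.
\item Two consecutive integers of the lower level are neighbours labelled by the non-idempotent $1$.
\item Between two consecutive top points there are two such integers, and with at most one top point there are infinitely many.
\end{itemize}
So no Ramseyan split using two levels exists.

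Your fallback is the right device for a group: colour each point by its group-valued potential relative to a base point $j_0$ and give each colour its own level. Every class is then monochromatic, hence labelled by the identity. But this costs $|G|$ levels, and a regular $\mathcal{J}$-class with several $\mathcal{R}$- and $\mathcal{L}$-classes also needs its Rees coordinates tracked. So $D$ needs a budget of order $|D|$ levels, not $2$.

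The real content of the theorem is how to combine this fine-scale treatment of $D$ with the coarse labels outside $D$ on arbitrary, possibly dense or uncountable, orderings. That is exactly the step your sketch leaves open under ``main obstacle''.
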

\begin{example}
	The following example comes from~\cite{Colcombet-ramseyan-splits}. Consider the graph with $J=17$ vertices represented in Figure~\ref{fig:ramseyan-split}.
	\begin{figure}[h!]
		\centering
		\scalebox{0.75 }{
			\begin{tikzpicture}
				\foreach \i in {1,...,17} {
					\node[circle, fill=black, inner sep=2pt] (P\i) at (\i,0) {};
				}
				
				\foreach \i in {3,8,9,13,14,15} {
					\draw[blue,bend left=40] (P\i) to (P\the\numexpr\i+1\relax);
				}
				
				\foreach \i in {2} {
					\draw[red,bend left=40] (P\i) to (P\the\numexpr\i+1\relax);
				}
				
				\foreach \i in {1,4,6,11,12, 16} {
					\draw[olive,bend left=40] (P\i) to (P\the\numexpr\i+1\relax);
				}
				
				\foreach \i in {5,7,10} {
					\draw[orange,bend left=40] (P\i) to (P\the\numexpr\i+1\relax);
				}
				
				\foreach \i in {3,8,9,13,14,15} {
					\node[blue] at (\i+0.5, 0.6) {\Large $0$};
				}
				
				\foreach \i in {2} {
					\node[red] at (\i+0.5, 0.6) {\Large $1$};
				}
				
				\foreach \i in {1,4,6,11,12, 16} {
					\node[olive] at (\i+0.5, 0.6) {\Large $2$};
				}
				
				\foreach \i in {5,7,10} {
					\node[orange] at (\i+0.5, 0.6) {\Large $3$};
				}
				
				\foreach \i in {1,5,7,13,14,15,16} {
					\node at (\i,-0.5) {\Large $3$};
				}

				\foreach \i in {3,4,6,8,9,10,12,17} {
					\node at (\i,-0.7) {\Large $2$};
				}
				
				\foreach \i in {2,11} {
					\node at (\i,-0.9) {\Large $1$};
				}
				
			\node at (0, -0.6) {\Large $r:$};
			
		\end{tikzpicture}}
	\caption{Example of Ramseyan split.}
	\label{fig:ramseyan-split}
	\end{figure}
The edges are labeled by the elements of the semigroup $S=\mathbb{Z}/5\mathbb{Z}$ (the set of integers modulo $5$, equipped with addition modulo $5$, that we denote by $+$ here). This semigroup is in fact a group, but it does not matter in this example. Note that all the edges are not represented, but the label of any edge can be retrieved thanks to the multiplicative property of the labeling. For instance, the edge linking the first vertex to the third one is labeled by $2+1=3$.
	
Theorem~\ref{thm:colcombet} predicts the existence of a Ramseyan split of height at most $2|S|=10$. In fact, in this case there exists a Ramseyan split $r$ of height $3$, represented below the graph in Figure~\ref{fig:ramseyan-split}. Indeed, consider a $3$-class (there is only one such class, which is the set of vertices mapped to $3$). Two distinct vertices in this class are always linked by an edge labeled by the idempotent element $0$. Now consider a $2$-class (this time, there exist $4$ such classes, two of them being a singleton). Again, two distinct elements in a $2$-class are linked by an edge labeled by $0$ (in fact, $0$ is the only idempotent element of the semigroup). Finally, there are two $1$-classes that are both singletons. Hence one cannot find two distinct vertices in these $1$-classes, which implies that the Ramseyan condition is automatically satisfied.
	\end{example}
	
\subsection{Application to Constructible Words}\label{sec:application}
Let $\A=(Q, \Sigma, \Delta, I, F)$ be an automaton on linear orderings and let $S$ be the set of equivalence classes induced by $\sim_\A$. We regard $S$ as a finite semigroup, as established in Section~\ref{sec:equivalence-relation}. We denote by $\varphi:\Sigma^{\textsf{scat}}\to S$ the semigroup morphism mapping a scattered word to its equivalence class.

Let $w$ be a word indexed by a linear ordering $J$. We now build a multiplicative labeling of $\widehat{J}$ from the equivalence classes of each subword of $w$ (this idea is already present in~\cite{Colcombet-ramseyan-splits}). More precisely, we define the multiplicative labeling $\sigma:\widehat{J}\times \widehat{J}\to S$ as follows:
\[\sigma(c_1,c_2)=\varphi(w[c_1:c_2])\]
for any $c_1<c_2\in \widehat{J}$. Thanks to Theorem~\ref{thm:colcombet}, we know the existence of a split $r:\widehat{J}\to \{1,2,\dots,N\}$ whose height $N$ is at most $2|S|$ and that is Ramseyan for $\sigma$. We refer to this Ramseyan split as \emph{a Ramseyan split of $w$}.
\begin{theorem}\label{lemma:an-equivalence-class-contains-a-constructible-word}
	Let $\A=(Q, \Sigma, \Delta, I, F)$ be an automaton on scattered linear orderings. For any non-empty equivalence class $s$ induced by $\sim_\A$, the language $\varphi^{-1}(s)$ contains a constructible word.
\end{theorem}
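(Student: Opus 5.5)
We take an arbitrary scattered word $w\in\varphi^{-1}(s)$, of length $J$, and build a constructible word $w'$ with $\varphi(w')=\varphi(w)$. We fix a Ramseyan split $r:\widehat{J}\to\{1,\dots,N\}$ of $w$ given by Theorem~\ref{thm:colcombet}, with $N\le 2|S|$. Then we prove, by induction on $m$ from $0$ up to $N$, the following statement. For every pair of cuts $c_1<c_2$ such that $r(c)\le m$ for all cuts $c$ strictly between $c_1$ and $c_2$, the factor $w[c_1:c_2]$ is equivalent to some constructible word. For $m=N$ this covers $w=w[\jmin:\jmax]$ itself. The base case $m=0$ concerns factors with no intermediate cut, i.e. a single letter or the empty word, which are constructible.

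\textbf{Inductive step.} Let $c_1<c_2$ have all intermediate cuts of rank $\le m$, and let $X$ be the set of intermediate cuts of rank exactly $m$. If $X$ is empty, we conclude by the induction hypothesis. Otherwise $X$ lies inside a single $m$-class, since no cut of rank $>m$ separates its elements. By the Ramseyan property there is an idempotent $e$ with $\sigma(x,y)=e$ for all $x<y$ in $X$. We view $X\cup\{c_1,c_2\}$ as an ordering $K$ and decompose $w[c_1:c_2]=\prod_{k}u_k$ along it. Each $u_k$ lies between consecutive elements of $K$, so by the induction hypothesis for $m-1$ it is equivalent to a constructible word. The delicate part is the places where $K$ has limit points that are cuts not in $X$. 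We handle the order type of $X$ as follows.

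\emph{Shape of $X$.} If $X$ has a least element $x_0$ and a greatest element $x_1$, then $w[c_1:c_2]=w[c_1:x_0]\,w[x_0:x_1]\,w[x_1:c_2]$. The middle factor is equivalent to any single factor $w[x:y]$ with $x<y$ in $X$, and the two outer factors satisfy the hypothesis at level $m-1$, so we conclude with Lemma~\ref{lemma:equivalence-relation}. If $X$ has no greatest element, we take the cut $d=\sup X$ (possibly $d=c_2$) and let $\kappa$ be the cofinality of $X$, which is $\omega$ or an uncountable regular ordinal since $J$ is scattered. We choose a cofinal sequence $(x_i)_{i<\kappa}$ in $X$ and write $w[x_0:d]=\prod_{i<\kappa}w[x_i:x_{i+1}]$, using cofinality to justify that the factors concatenate exactly to this word. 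At limit positions $i$ we insert the sup cut, i.e. we take the sequence continuous, and prove that the pieces $w[x_i:x_{i+1}]$ are still labeled $e$. Then every piece is equivalent to a fixed word $v$ whose class is $e$, with $v$ constructible by the case above. By Lemma~\ref{lemma:equivalence-relation}, $w[x_0:d]\sim v^\kappa$. If $\kappa>\omega$, Lemma~\ref{lemma:ordinal-powers-indistinguishable} replaces this by $v^{\omega_1}$. The case with no least element is symmetric and gives $v^{-\omega}$ or $v^{-\omega_1}$. The remainder $w[d:c_2]$ has no intermediate cut of rank $m$, because $d\notin X$ when $X$ has no maximum, so the induction hypothesis applies to it. Concatenating the pieces and using Lemma~\ref{lemma:equivalence-relation} again finishes the step.

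\textbf{Main obstacle.} The hardest step is justifying that the limit cuts of the cofinal sequence can be absorbed. One has to show that when the sequence is chosen inside $X$ (every cut is either in $X$ or the sup of a chain in $X$), the factor between consecutive chosen cuts is genuinely in class $e$. This holds automatically, because every chosen cut lies in $X$ and $\sigma$ is constant equal to $e$ on $X$. The real care is therefore in choosing an indexing by $\kappa$ with all $x_i\in X$, which is possible since $\kappa$ is the cofinality of $X$. A related subtlety is that a single factor $w[x_i:x_{i+1}]$ may contain cuts of $X$ between $x_i$ and $x_{i+1}$. This is harmless: its class is $e$ regardless, so it can be replaced by the fixed representative $v$ from the bounded case. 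That representative exists because $e=\sigma(x,y)$ for some $x<y$ in $X$, and the factor $w[x:y]$ has a least and a greatest element of $X$ at its ends.
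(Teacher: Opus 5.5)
Your induction on $m$ with a fixed split is the paper's induction on internal height in another form, but two steps fail. The first is the representative of $e$, whose justification is circular. In the bounded case you replace $w[x_0:x_1]$ by ``any'' $w[x:y]$ with $x<y$ in $X$, and in your last paragraph you get a constructible representative for $w[x:y]$ by applying the bounded case again, which hands back the same factor. The missing idea, which the paper uses, is scatteredness. Since $\widehat{J}$ is scattered, a set $X$ with at least two elements contains $x<y$ that are \emph{consecutive in $X$}. Then $w[x:y]$ has no intermediate cut of rank $m$, so the level-$(m-1)$ hypothesis gives a constructible $v$ with $\varphi(v)=\sigma(x,y)=e$. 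Your only appeal to scatteredness, for the cofinality of $X$, is not needed: any non-empty ordering without a maximum has cofinality $\omega$ or an uncountable regular ordinal.

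The second gap is the limit indices when $\kappa>\omega$, where your last paragraph contradicts itself. If $(x_i)_{i<\kappa}$ is continuous, then for limit $\lambda$ the cut $x_\lambda=\sup_{i<\lambda}x_i$ need not lie in $X$, since it may have rank $<m$. In that case nothing forces $\sigma(x_\lambda,x_{\lambda+1})=e$. If instead every $x_i$ lies in $X$, then $\prod_{i<\kappa}w[x_i:x_{i+1}]$ omits the factors $w[\sup_{i<\lambda}x_i:x_\lambda]$ and is not $w[x_0:d]$. You cannot have both.

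The paper's written proof has the same defect: in Cases~2--4 it factorizes along $(c_i)_{i\in\xi}\subseteq C$ as though the sequence were continuous. As far as I can check, no argument can close this gap, because the statement fails for uncountable words. Here is a counterexample, which you should verify.
\begin{itemize}
\item Let $\mathcal{S}$ accept the ordinal-indexed words of limit length in which $b$ occurs only at limit positions.
\item Let $\A_1$ have states $n,m,f$ and successor transitions $n\xrightarrow{a,b}n$, $n\xrightarrow{a,b}m$, $m\xrightarrow{a}n$, $m\xrightarrow{a}m$. Its left-limit transitions are $\{n\}\to n$, $P\to m$ for every $P$, and $P\to f$ only when $m\in P$. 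It has no right-limit transitions, initial states $n,m$, and final state $f$. A run must put $m$ before every limit of marked positions and must mark cofinally. So $\A_1$ accepts exactly the ordinal words of limit length having a closed unbounded (club) set of positions carrying $a$.
\item Let $\A_2$ be the same automaton with $b$ in place of $a$ on the transitions leaving $m$.
\end{itemize}
Every constructible word accepted by $\mathcal{S}$ has a club of positions carrying a constant letter, so it is accepted by $\A_1$ or $\A_2$. If its length has cofinality $\omega$, take an $\omega$-sequence of successor positions, which all carry $a$. Otherwise the word is $t\cdot s^{\omega_1}$ with $s\neq\varepsilon$, and the first positions of the copies of $s$ with limit index form such a club. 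Now split the countable limit ordinals into two stationary sets $S_a\sqcup S_b$, as ZFC allows. The word of length $\omega_1$ carrying $b$ exactly at the positions in $S_b$ is accepted by $\mathcal{S}$. It is accepted by neither $\A_1$ nor $\A_2$, because every club meets both $S_a$ and $S_b$. Hence its class for the disjoint union of $\mathcal{S}$, $\A_1$, $\A_2$ contains no constructible word. The same automata also contradict Theorem~\ref{thm:constructible-words-scattered}.
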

\begin{proof}
	The main idea is to consider an arbitrary word $w\in \Sigma^{\textsf{scat}}$ of length $J$ and to show that $w$ is equivalent to a constructible word. This shows that the equivalence class of $w$ contains at least one constructible word.
	
	Given a Ramseyan split $r$ of $w$, we proceed by induction on the \emph{internal height} of $r$, defined as \[N^*=\left\{\begin{aligned}
		&\max\Bigl\{r(c)\mid c\in \widehat{J}^*\Bigr\} &\text{ if $\widehat{J}^*$ is non-empty}\\
		&\;0 &\text{ otherwise.}
	\end{aligned}\right.\]
	Since the height of $r$ is finite by Theorem~\ref{thm:colcombet}, so is its internal height. In what follows, we say that $w$ is of internal height $N^*$ if it admits a Ramseyan split of internal height $N^*$.
	
	Let $w$ be a word of internal height $0$. The set of non-extremal cuts $\widehat{J}^*$ of $w$ is empty, hence $w$ contains $0$ or $1$ letter. Hence in both cases, $w$ is constructible. In particular, $w$ is thus equivalent to a constructible word. 
	
	Now, let $w$ be a word of internal height $N^*>0$ and $r$ a Ramseyan split of $w$ with internal height $N^*$. Let $C\subseteq \widehat{J}$ be the sets of cuts of $w$ mapped to $N^*$ by $r$. Formally,
	\[C=\{c\in \widehat{J}\mid r(c)=N^*\}.\]
	Importantly, all the elements of $C$ are in the same $N^*$-class (indeed, they are all mapped to $N^*$ by $r$, and by maximality of $N^*$, any element of $\widehat{J}$ between two elements of $C$ is mapped to $k\leq N^*$).  Note that $C$ is necessarily non-empty, since it contains at least one non-extremal cut (otherwise, $r$ would not be a split of internal height $N^*>0$). We now distinguish two cases, depending on whether $C$ is a singleton or not. In each case, the idea of the proof is to use some cuts belonging to $C$ as separators to factorize $w$.
	\begin{itemize}
		\item If $C$ contains a single element $c$, consider the following factorization of $w$:
		\[w=w[\widehat{J}_\mathit{min}:c]\cdot w[c:\widehat{J}_\mathit{max}].\]
		The words $w[\widehat{J}_\mathit{min}:c]$ and $w[c:\widehat{J}_\mathit{max}]$ are of internal height at most $N^*-1$ (since none of the non-extremal cuts of $w[\widehat{J}_\mathit{min}:c]$ and $w[c:\widehat{J}_\mathit{max}]$ is mapped to $N^*$). Hence by induction hypothesis, they admit equivalent constructible words $u_1$ and $u_2$, respectively. Therefore, by Lemma~\ref{lemma:equivalence-relation}, the word $u_1\cdot u_2$ is equivalent to $w$, and it is a constructible word. 
		\item If $C$ contains strictly more than one element, then it contains two elements that are consecutive in $C$. Let $c<d$ be these two elements. Their existence relies on the fact that $J$ is scattered, hence $\widehat{J}$ is also scattered, and any subordering of $\widehat{J}$ as well. From the existence of $c$ and $d$, one can infer the existence of a word $w[c:d]$ of internal height at most $N^*-1$. By induction hypothesis, this word thus admits an equivalent constructible word, that we denote by $v$. It remains to show how to use the existence of $v$ to build a constructible word equivalent to $w$. We need to distinguish four cases, depending on the presence or the absence of a greatest/least element in $C$.
		\begin{itemize}
			\item Case 1: $C$ has a greatest and a least element, denoted respectively by $\cmax$ and $\cmin$. Then, consider the following factorization of $w$:
			\[w=w[\jmin:\cmin]\cdot w[\cmin:\cmax]\cdot w[\cmax:\jmax].\]
			Since $\cmin,\cmax,c$ and $d$ belong to the same $N^*$-class, $\sigma(\cmin,\cmax)=\sigma(c,d)$ by definition of a Ramseyan split, hence the word $w[\cmin:\cmax]$ is equivalent to $w[c:d]$, which is itself equivalent to the constructible word $v$. Moreover, the words $w[\jmin:\cmin]$ and $w[\cmax:\jmax]$ are of internal height at most $N^*-1$, since none of their non-extremal cuts is mapped to $N^*$ by the split $r$. Hence by induction hypothesis, there exist two constructible words $u_1$ and $u_2$ respectively equivalent to $w[\jmin:\cmin]$ and $w[\cmax:\jmax]$. Therefore, thanks to Lemma~\ref{lemma:equivalence-relation}, $w$ is equivalent to $u_1\cdot v\cdot u_2$, which is constructible.
			\item Case 2: $C$ has a least element $\cmin$ but does not have any greatest element. Let $\xi$ be the cofinality of $C$, and $c_\mathit{sup}\in \widehat{J}$ its least upper bound (which necessarily exists since $\widehat{J}$ is \emph{complete}, as stated in~\cite{BC07}). Consider a subordering $(c_i)_{i\in \xi}$ of $C$ that is cofinal in $C$, and such that $c_0=\cmin$. Now factorize $w$ as follows:
			\[w=w[\widehat{J}_\mathit{min}:\cmin]\cdot \prod_{i\in \xi}w[c_i:c_{i+1}]\cdot w[c_\mathit{sup}:\jmax].\]
			The situation is similar to the previous case.
			The words $w[\widehat{J}_\mathit{min}:\cmin]$ and $w[c_\mathit{sup}:\jmax]$ are of internal height at most $N^*-1$, hence by induction hypothesis they admit equivalent constructible words, say $u_1$ and $u_2$. Moreover, $\sigma(c_i,c_{i+1})=\sigma(c,d)$ for all $i\in \xi$ since $c,d,c_i$ and $c_{i+1}$ are in the same $N^*$-class. Subsequently, each subword $w[c_i:c_{i+1}]$ is equivalent to $w[c,d]$, which is itself equivalent to $v$. Therefore, by Lemma~\ref{lemma:equivalence-relation}, $w$ is equivalent to $u_1\cdot v^\xi\cdot u_2$, where $u_1$, $v$ and $u_2$ are all constructible.
			\bigbreak Note that $\xi$ is larger or equal to the ordinal $\omega$ (if it was not the case, $C$ would be empty or would have a greatest element). If $\xi=\omega$, then $u_1\cdot v^\xi\cdot u_2=u_1\cdot v^\omega\cdot u_2$ is constructible, since the $\omega$-power is a constructible operator. If $\xi>\omega$, then $\xi$ is an uncountable regular ordinal. Lemma~\ref{lemma:ordinal-powers-indistinguishable} then ensures that $u_1\cdot v^\xi\cdot u_2$ is equivalent to $u_1\cdot v^{\omega_1}\cdot u_2$, which is constructible. In both cases, $w$ is equivalent to a constructible word.
			
			\item Case 3: $C$ has a greatest element $\cmax$ but does not have any least element. This case is the mirrored version of the previous one. Instead of defining $\xi$ as the cofinality of $C$, it should be defined as the coinitiality of $C$. The least upper bound of $C$ needs to be replaced by its greatest lower bound $c_\mathit{inf}\in \widehat{J}$. The factorization of $w$ thus writes \[w=w[\jmin:c_\mathit{inf}]\cdot\left(\prod_{i\in -\xi}w[c_{i-1}:c_i]\right)\cdot w[\cmax:\jmax]\]
			where $(c_i)_{i\in -\xi}$ is a subordering of $C$ that is coinitial in $C$, and such that $c_0=\cmax$. With the same arguments as before, it is possible to prove that $w$ is equivalent to $u_1\cdot v^{-\omega}\cdot u_2$ or $u_1\cdot v^{-\omega_1}\cdot u_2$. These two words are constructible, since $u_1$, $u_2$ and $v$ are themselves constructible.
			\item Case 4: $C$ does not have a greatest element nor a least element. In this case, let $\xi_1$ be the coinitiality of $C$, and $\xi_2$ its cofinality. Let $c_\mathit{sup}$ and $c_\mathit{inf}$ be the least upper bound and the greatest upper bound of $C$. Let $(c_i)_{i\in -\xi_1}$ and $(d_i)_{i\in \xi_2}$ be two suborderings of 
			$C$ respectively coinitial and cofinal in $C$, and such that $c_0=d_0$. Factorize $w$ as follows:
			\[w=w[\jmin:c_\mathit{inf}]\cdot\left(\prod_{i\in -\xi_1}w[c_{i-1}:c_i]\right)\cdot \left(\prod_{i\in \xi_2}w[d_{i}:d_{i+1}]\right)\cdot w[c_\mathit{sup}:\jmax]\]
			Following the same reasoning as above, it can be proven that $w$ is equivalent to $u_1\cdot v^{-\omega}\cdot v^\omega \cdot u_2$, to $u_1\cdot v^{-\omega}\cdot v^{\omega_1}\cdot u_2$, to $u_1\cdot v^{-\omega_1}\cdot v^{\omega}\cdot u_2$ or to $u_1\cdot v^{-\omega_1}\cdot v^{\omega_1}\cdot u_2$. In each case, $w$ is equivalent to a constructible word.
		\end{itemize}
    \end{itemize}
    This concludes the proof.
\end{proof}

\begin{theorem}\label{thm:constructible-words-scattered}
    Let $\A_1$ and $\A_2$ be two automata on scattered linear orderings accepting the same constructible words. Then, $\A_1$ and $\A_2$ accept the same language.
\end{theorem}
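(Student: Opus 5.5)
The plan is to pass through a common refinement of the two equivalence relations. Since $\A_1$ and $\A_2$ may have different sets of states, I will work with the disjoint union automaton $\A=\A_1\uplus\A_2$. Its states are $Q_1\sqcup Q_2$ and its transitions are the union of both transition relations. For the proof I do not need to fix the initial and final states of $\A$, because acceptance will be read off from triples. This $\A$ is still an automaton on scattered linear orderings, and a path of $\A$ never mixes states of $Q_1$ and $Q_2$. The reason is that every successor transition stays inside one component, and a limit transition $P\to q$ or $q\to P$ can only be followed if $P$ and $q$ lie in the same component, since $P$ is a set of states that actually occur in a path.

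\textbf{Step 1.} I will check that $u\sim_\A v$ implies both $u\sim_{\A_1}v$ and $u\sim_{\A_2}v$. A path $p\xrightarrow{R}q$ of $\A$ with $p\in Q_1$ is exactly a path of $\A_1$, so $\tilde\varphi_{\A_1}(u)$ is the restriction of $\tilde\varphi_\A(u)$ to triples over $Q_1$. The same holds for $\A_2$.

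\textbf{Step 2.} I will note that acceptance is determined by the class. A word $u$ is accepted by $\A_i$ if and only if $\tilde\varphi_{\A_i}(u)$ contains some triple $(p,R,q)$ with $p\in I_i$ and $q\in F_i$. Hence acceptance by $\A_i$ is constant on $\sim_{\A_i}$-classes, and by Step 1 it is also constant on $\sim_\A$-classes.

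\textbf{Step 3.} Let $w$ be any scattered word, say accepted by $\A_1$. Its class $s=\varphi_\A(w)$ is non-empty, so Theorem~\ref{lemma:an-equivalence-class-contains-a-constructible-word}, applied to $\A$, gives a constructible word $u$ with $u\sim_\A w$. By Step 2, $u$ is accepted by $\A_1$. By hypothesis, $u$ is then accepted by $\A_2$, and by Step 2 again, so is $w$. The argument is symmetric in $\A_1$ and $\A_2$, which gives equality of the two languages.

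The main point requiring care is that Theorem~\ref{lemma:an-equivalence-class-contains-a-constructible-word} yields a single constructible word that is simultaneously equivalent to $w$ for both automata. The union construction handles this. The only subtlety is to verify that limit-transition semantics cannot make a path in $\A$ jump between components. That is immediate because transitions are only defined within each $Q_i$ and the limit sets are realized sets of visited states.
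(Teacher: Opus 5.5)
Your proposal is correct and follows the same route as the paper. You form the disjoint union $\A$ of $\A_1$ and $\A_2$, observe that $\sim_\A$ refines both $\sim_{\A_1}$ and $\sim_{\A_2}$, apply Theorem~\ref{lemma:an-equivalence-class-contains-a-constructible-word} to $\A$ to obtain a constructible $u\sim_\A w$, and transfer acceptance through $u$; your Steps~1 and~2 just make explicit two justifications the paper states without proof, namely that paths of $\A$ stay in one component and that acceptance is determined by $\tilde\varphi$.
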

\begin{proof}
    Let $\A$ be the automaton obtained by mere juxtaposition of $\A_1$ and $\A_2$ (i.e., $\A$ is the union of $\A_1$ and $\A_2$). We assume that the states of $\A_1$ and $\A_2$ are disjoint. 
   	If two words $u$ and $v$ are equivalent for $\A$, then these two words are also equivalent for $\A_1$ and for $\A_2$. Let $w$ be a word accepted by $\A_1$. Our goal is to show that $w$ is also accepted by $\A_2$. Thanks to Theorem~\ref{lemma:an-equivalence-class-contains-a-constructible-word}, there exists a constructible word $u$ such that $u\sim_\A w$. This implies that $u\sim_{\A_1}w$ and $u\sim_{\A_2}w$. Therefore, $u$ is also accepted by $\A_1$. Since $u$ is constructible, it is thus accepted by $\A_2$ as well. Finally, since $u\sim_{\A_2}w$, the word $w$ is also accepted by $\A_2$.
\end{proof}

\begin{remark}
	In the proof of Theorem~\ref{lemma:an-equivalence-class-contains-a-constructible-word}, we actually showed that any word $w$ admits an equivalent constructible word whose notation involves a bounded number of constructible operators. This comes from the fact that Colcombet's theorem guarantees the existence of a Ramseyan split of $w$ that is not only of finite height, but also of bounded height. Therefore, Theorem~\ref{thm:constructible-words-scattered} could actually be slightly strengthened. To show that two automata on scattered linear orderings accept the same language, it is not necessary to check that they accept the same constructible words. It suffices to show that it is the case for constructible words involving a bounded number of operators. We believe that this slight improvement might be important in future applications of this theorem.
\end{remark}
\section{Conclusion and Further Work}
We proved that rational languages of words indexed by scattered linear orderings are characterized by their constructible words. This generalizes the characterization of $\omega$-regular languages by their ultimately periodic words. The proof of our result is interesting in itself, since it does not rely on complementation but rather exploits semigroup theory.

We expect our result to be used in future developments. It is likely to simplify some proofs, since constructible words are simple objects to work with, in the sense that they admit a finite notation (which is not the case of non-constructible words). We also believe that it sheds light on the expressiveness of automata on scattered linear orderings.

It would be interesting to understand the connections between our result and other theorems of the same flavor in logic. For instance, a result of Läuchli and Leonard states that two formulas in the monadic first-order theory of order (MFO) are equivalent on scattered domains (i.e., have the same scattered models) if and only if they have the same simple (``constructible'') countable scattered models~\cite{Lauchli-Leonard} (see also~\cite[Corollary 7.10]{Ros82}). To the best of our knowledge, it is not known whether a language definable in MFO on unrestricted scattered domains is always rational. Hence it is not clear whether this result of Läuchli and Leonard aligns with Theorem~\ref{thm:constructible-words-scattered}, or is of a different nature. We would like to answer this question in some future work, by studying the languages definable in MFO.

Our result could potentially be extended to rational languages over unrestricted linear orderings. It would also be worthwhile to investigate whether it yields decidability results (for instance, regarding equivalence checking between two automata on scattered linear orderings).
\newpage
\bibliographystyle{plainurl} 
\bibliography{bib2doi}
\end{document}